\newtheorem{thm}{Theorem}
\newtheorem{cor}{Corollary}
\newtheorem{lem}{Lemma}
\begin{document}
\title{Statistical End-to-end Performance Bounds for Networks under Long Memory FBM Cross Traffic}
\author{
\authorblockN{Amr Rizk and Markus Fidler}
\authorblockA{Institute of Communications Technology, Leibniz Universit\"{a}t Hannover}}
\maketitle
\begin{abstract}
Fractional Brownian motion (fBm) emerged as a useful model for self-similar and long-range dependent Internet traffic. Approximate performance measures are known from large deviations theory for single queuing systems with fBm through traffic. In this paper we derive end-to-end performance bounds for a through flow in a network of tandem queues under fBm cross traffic. To this end, we prove a rigorous sample path envelope for fBm that complements previous approximate results. We find that both approaches agree in their outcome that overflow probabilities for fBm traffic have a Weibullian tail. We employ the sample path envelope and the concept of leftover service curves to model the remaining service after scheduling fBm cross traffic at a system. Using composition results for tandem systems from the stochastic network calculus we derive end-to-end statistical performance bounds for individual flows in networks under fBm cross traffic. We discover that these bounds grow in $\mathcal{O}\bigl(n (\log n)^{\frac{1}{2-2H}} \bigr)$ for $n$ systems in series where $H$ is the Hurst parameter of the fBm cross traffic. We show numerical results on the impact of the variability and the correlation of fBm traffic on network performance.
\end{abstract}
\section{Introduction}
\label{sec:Introduction}
Since the beginning of the Internet classical queuing theory has been used as the decisive methodology to analyze many relevant problems in computer networking \cite{kleinrock75}, most prominently to prove the efficiency of packet switching over circuit switching. The significance of queuing theory stems from explicit closed-form performance measures, e.g. backlog and delay, for queuing systems under Poisson traffic\footnote{Poisson traffic refers to exponential packet inter-arrival and service times.}, such as the M$\mid$M$\mid$1 single server system with infinite buffer. In practice results obtained for systems with infinite buffer are frequently transferred with reasonable precision to systems with finite buffer, e.g. for buffer sizing \cite{appenzeller:sizingbuffers}. The outstanding features of queuing theory that are essential for analysis of networks are (\textit{i}) multiplexing and demultiplexing of flows\footnote{Multiplexing independent Poisson flows leads to a Poisson flow as well as blind demultiplexing of a Poisson flow leads to Poisson flows.}, i.e. routing, and (\textit{ii}) product form queuing networks where constituent queues can be analyzed as if in isolation.

While it has long been known that Internet traffic sources do not fulfill the memoryless property assumed by the Poisson traffic model, it has been argued that the aggregate of a large number of multiplexed flows will tend towards Poisson. This assumption is backed up by the fact that the sum of many independent Bernoulli trials converges to a Poisson random variable. In the mid 90's, the groundbreaking discovery from extensive measurements was, however, that aggregate Internet traffic possesses significant long-range dependence (LRD) as well as statistical self-similarity \cite{leland94,paxon95,crovella97,willinger97,feldmann:iptraffic}. A possible mathematical justification for these traffic properties arises from the superposition of many independent on-off sources with heavy-tailed on and off periods \cite{taqqu97}. This model matches file size distributions on storage systems \cite{crovella97,willinger97}.

A random process that captures self-similarity is fractional Brownian motion (fBm) \cite{mandelbrot68}. FBm is characterized by the Hurst parameter $H$ where fBm with $H\in \left(\frac{1}{2},1\right)$ is LRD and widely accepted in the literature \cite{leland94,norros94,norros95} as a useful model for aggregate Internet traffic. The cases $H=\frac{1}{2}$ and $H\in \left(0,\frac{1}{2}\right)$ correspond to standard Brownian motion and short-range dependent fBm, respectively. Using theories including effective bandwidths and large deviations significant results have been derived for the performance of single queuing systems fed with LRD fBm traffic \cite{norros95,duffield94,massoulie99,kelly96}. Regarding the analysis of networks these theories can, however, not carry the outstanding properties of queuing theory forward to fBm.

Properties similar to (\textit{i}) and (\textit{ii}) from queuing theory have, however, been established in the deterministic \cite{chang00,leboudec:networkcalculus} and stochastic \cite{chang00,li07,burchard06,ciucu06,fidler06,jiang:stochasticnetworkcalculus} network calculus. The network calculus uses the concept of service curves \cite{parekh:processorsharing1,sariowan:servicecurves,cruz:qosmanagement} to characterize the service provided by queuing systems. Leftover service curves are used to analyze the effects of scheduling, i.e. multiplexing and demultiplexing cross traffic (\textit{i}). Service curves of tandem systems are composed by convolution enabling the analysis of networks  (\textit{ii}). Finally, arrival envelopes are used as a model of traffic flows \cite{cruz:networkdelaycalculus,sidi93,starobinski:stochasticallyboundedburstiness,boorstyn:effectiveenvelopes,Yin02,li07} to derive performance bounds on backlog and delay for service curve systems.

In this paper we derive end-to-end statistical performance bounds for tandem systems under LRD fBm cross traffic. We contribute a rigorous sample path envelope for fBm traffic that complements an approximate envelope that follows from a known asymptotic backlog bound. Both envelopes agree in the Weibullian decay of overflow probabilities. These envelopes are the basis for network analysis using the stochastic network calculus. Owing to the concept of leftover service curves we quantify the effects of fBm cross traffic on the performance of through flows and find that the correlation of the cross traffic has severe impact. Finally, we derive end-to-end performance bounds for a through flow that traverses $n$ tandem systems each under fBm cross traffic. We show that these bounds grow in $\mathcal{O}\bigl(n (\log n)^{\frac{1}{2-2H}} \bigr)$. Our finding compares to \cite{ciucu06} where end-to-end performance bounds in $\mathcal{O}(n \log n)$ have been derived for traffic with exponentially as opposed to Weibull bounded burstiness. For $H = \frac{1}{2}$ we recover the previous result.

The remainder of this paper is structured as follows. In Sect. \ref{sec:RelatedWork} we review related work on the application of fBm for performance analysis of computer networks. In Sect. \ref{sec:SamplePathEnvelopeForLongRangeDependentFBmTraffic} we provide a sample path envelope for LRD fBm traffic based on its moment generating function. In Sect. \ref{sec:LeftoverServiceCurveUnderFBMCrossTraffic} we derive a leftover service curve under fBm cross traffic. In Sect. \ref{sec:EndToEndConcatenation} we present end-to-end service curves and statistical performance bounds for tandem systems. Sect. \ref{sec:Conclusion} gives brief conclusions.
\section{Related Work on FBM Queuing Systems}
\label{sec:RelatedWork}
FBm is a widely accepted model for self-similar Internet traffic \cite{leland94,norros94,norros95}. It can be short- or long-range dependent conditional on its Hurst parameter $H \in (0,\frac{1}{2})$ or $H \in (\frac{1}{2},1)$, respectively. An fBm process $Z(t)$ has stationary Gaussian increments and the following basic properties: $Z(0)=0$, $\mathsf{E}[Z(t)] = 0$, and $\mathsf{E}[Z(t)^2] = \sigma^2 t^{2H}$ for all $t \ge 0$ where $\sigma > 0$ is its standard deviation at $t=1$. The increment process of fBm, also called fractional Gaussian noise (fGn), has positive as well as negative increments. While this property of fGn is not matched by real network traffic it avoids excessive complexity and keeps the model solvable \cite{kelly96}. The long-range dependence of fBm for $H \in (\frac{1}{2},1)$ is established by the infinite sum $\sum_t v(t) = \infty$ of the auto-covariance of the increments $v(t) \approx \sigma^2 H(2H-1) t^{2H-2}$ as $t \rightarrow \infty$.

In the sequel cumulative arrivals from a traffic source in an interval $[\tau,t)$ are denoted $A(\tau,t)$. Shorthand notation $A(t)$ is used for $A(0,t)$. The arrivals of an fBm traffic source are modeled as the superposition of a mean rate $\lambda$ and an fBm process $Z(t)$
\begin{equation}
A(t) = \lambda t + Z(t).
\label{eq:aggr_A}
\end{equation}

The challenge of the fBm traffic model in case of LRD is that its variance $\sigma^2 t^{2H}$ grows superlinearly in $t$, i.e. the traffic is heavily bursty with burst periods that are more likely to be sustained for a long time. These properties make the calculation of performance bounds for fBm traffic hard.

The backlog process $B(t)$ at a lossless work-conserving constant rate server with capacity $C$ is described by Reich's equation, see e.g. \cite{kumar04},
\begin{equation*}
B(t) = \sup_{\tau \in [0,t]} \left\{A(\tau,t) - C (t-\tau)\right\}.
\end{equation*}
The difficulty behind the analysis of a statistical bound $b$ for the steady state backlog $B$, i.e. letting $t \rightarrow \infty$, is to find the value $\tau^*$ that achieves the supremum in
\begin{equation}
\mathsf{P}[B > b] = \mathsf{P} \biggl[ \sup_{\tau \in [0,t]} \{ A(\tau,t) - C(t-\tau) \} > b \biggr]
\label{eq:backlogbound}
\end{equation}
since $\tau^*$ is a random variable, see \cite{li07} for explanation.

Large deviations theory is frequently used to analyze the asymptotic decay rate of the overflow probability of a backlog bound \cite{glynn94,duffield94,massoulie99,chang00}. The asymptotic calculation makes use of the principle of the largest term stating
\begin{equation}
\mathsf{P}[B > b] \approx \sup_{\tau \in [0,t]} \mathsf{P} \left[A(\tau,t) - C(t-\tau) > b\right]
\label{eqn:change_P_sup}
\end{equation}
where the term on the right hand side strictly provides only a lower bound. For fBm traffic (\ref{eq:aggr_A}) at a server with capacity $C$ the following asymptotic holds for the decay rate of the overflow probability \cite{duffield94}
\begin{equation*}
\lim_{b\rightarrow \infty} \frac{\log \mathsf{P} [ B > b ]}{b^{2-2H}} = -\inf_{k>0} k^{2H-2}\frac{\left(k+C-\lambda\right)^2}{2}.
\end{equation*}
It follows that $\lim_{b \rightarrow \infty} \mathsf{P} [B>b] = \varepsilon_a$ where \cite{duffield94}
\begin{equation}
\varepsilon_a = \exp \Biggl(-\frac{1}{2\sigma^2}\biggl(\frac{C-\lambda}{H}\biggr)^{2H} \biggl( \frac{b}{1-H} \biggr)^{2-2H} \Biggr).
\label{eq:norros_bnd}
\end{equation}
The resulting overflow probability has a Weibull tail that simplifies to an exponential distribution for the special case $H=\frac{1}{2}$. The backlog bound was proven to be logarithmically asymptotical generally and exact for $H=\frac{1}{2}$.

The large deviations result (\ref{eq:norros_bnd}) agrees with a solution deduced for the largest term (\ref{eqn:change_P_sup}) in \cite{norros94,norros95}. The derivation makes use of the Gaussian distribution of the increments of fBm and yields the approximation
\begin{equation*}
\mathsf{P}\left[B>b\right] \approx \sup_{t\geq0}\bar{\Phi}\biggl(\frac{(C-\lambda)t+b}{\sigma t^H}\biggr)
\end{equation*}
where $\bar{\Phi}(x)=P(Z(1)>x)$ is the complementary cumulative distribution function of a Gaussian random variable, i.e. the increment of fBm. After maximizing over $t$ the backlog bound approximation in \cite{norros94} is $\mathsf{P} [B>b] \approx \varepsilon_a$ where $\varepsilon_a$ is identical to (\ref{eq:norros_bnd}). A comprehensible introduction covering the derivation of this bound can also be found in \cite{grimm08}.

The proof of the large deviations theory builds on the G\"artner-Ellis condition which establishes a direct relation to the effective bandwidth of a traffic flow \cite{duffield94}. The theory of effective bandwidths, e.g. \cite{kelly96,chang00}, is a major tool for the analysis of traffic flows as it gives a measure for resource requirements at different time scales. The effective bandwidth of a flow $\alpha(\theta,t) = \frac{1}{\theta t} \log \mathsf{E}\left[e^{\theta A(t)}\right]$ lies between its average and peak rate depending on the parameter $\theta > 0$. For fBm it holds that
\begin{equation}
\alpha(\theta,t) = \lambda + \frac{\theta \sigma^2}{2} t^{2H-1} .
\label{eq:fbmeffbw}
\end{equation}
In case of $H \in (\frac{1}{2},1)$ the effective bandwidth of fBm traffic exhibits a continuous growth in $t$ due to LRD \cite{kelly96}.

In \cite{li07} a connection between effective bandwidths and effective envelopes is established. In contrast to asymptotic results for large buffers from large deviations theory, effective envelopes  in conjunction with the stochastic network calculus \cite{boorstyn:effectiveenvelopes,Yin02,li07,burchard06,ciucu06,jiang:stochasticnetworkcalculus} can provide non-asymptotic performance bounds. Moreover, recent stochastic network calculus provides methods for derivation of stochastic leftover service curves as well as for composition of tandem systems. Effective envelopes $E(t-\tau)$ are statistical upper bounds of the cumulative arrivals $A(\tau,t)$ of the form
\begin{equation*}
\mathsf{P} [A(\tau,t) - E(t-\tau) > 0 ] \leq \varepsilon_p.
\end{equation*}
An envelope for fBm traffic is derived in \cite{mayor:fbmtimescale,fonseca00,li07} as
\begin{equation}
E(t) = \lambda t + \sqrt{-2\log \varepsilon_p} \sigma t^H .
\label{eq:pointwisefbmenvelope}
\end{equation}

The definition of effective envelope is point-wise in the sense that it can be violated at each point in time with overflow probability $\varepsilon_p$. Applying the approximation by the largest term (\ref{eqn:change_P_sup}) the envelope (\ref{eq:pointwisefbmenvelope}) is used in \cite{fonseca00} to recover the backlog bound from large deviations theory (\ref{eq:norros_bnd}).

In contrast, the derivation of performance bounds using the stochastic network calculus builds on sample path arguments, such as the backlog bound (\ref{eq:backlogbound}), and requires a bound for $A(\tau,t)$ for all $\tau \in [0,t]$, i.e. a sample path envelope of the form
\begin{equation}
\mathsf{P}\biggl[ \sup_{\tau \in [0,t]} \{ A(\tau,t) - E(t-\tau) \} > 0 \biggr] \leq \varepsilon_s.
\label{eq:samplepathenvelope}
\end{equation}
Such sample path envelopes are constructed in \cite{li07} using Boole's inequality under the assumption of a time scale limit $T$, i.e. by summing the constant point-wise overflow probabilities $\varepsilon_p$ over all $t \in [0,T]$. The time scale in this context can be regarded as a constraint on the duration of busy periods. In case of fBm the duration of busy periods has, however, been found to grow extremely fast with $H$ \cite{mayor:fbmtimescale}.

Methods for construction of sample path envelopes that do not require a priori assumptions on the relevant time scale have been developed in \cite{cruz:qosmanagement,Yin02,ciucu06}. The general approach is to use a point-wise envelope with parameter $b$
\begin{equation}
\mathsf{P}\left[A(\tau,t) - E(t-\tau) > b \right] \leq \varepsilon_p(b).
\label{eq:pointwiseenvelopeprofile}
\end{equation}
that has a decaying and integrable overflow profile $\varepsilon_p(b)$, i.e. $\int_{0}^{\infty} \varepsilon_p(b)db$ is finite. Typically, when constructing a sample path envelope $b$ is substituted by a slack rate $\varrho \cdot (t-\tau)$. The slack rate relaxes the envelope such that $\varepsilon_p$ decreases with increasing interval width $(t-\tau)$. Finally, taking Boole's inequality over all $\tau$ to derive the sample path overflow probability $\varepsilon_s$ (\ref{eq:samplepathenvelope}) translates to integrating $\int_{0}^{t} \varepsilon_p(\varrho (t-\tau)) d\tau$ that remains finite for all $t \ge 0$ including $t \rightarrow \infty$.

The construction of a sample path envelope for fBm traffic is, however, not straightforward since no simple overflow profile exists, see Sect. \ref{sec:SamplePathEnvelopeForLongRangeDependentFBmTraffic}. A simplifying approach is proposed in \cite{Yin02} where it is argued that the Weibull tail (\ref{eq:norros_bnd}) implies an envelope for fBm traffic. Such an envelope is, however, based on the approximation by the largest term (\ref{eqn:change_P_sup}). A rigorous sample path envelope for fBm as well as end-to-end performance bounds under fBm cross traffic have not been derived.
\section{Gamma Bound for FBM Sample Paths}
\label{sec:SamplePathEnvelopeForLongRangeDependentFBmTraffic}
In this section we derive a sample path envelope for fBm traffic. The envelope is an essential prerequisite for application of the stochastic network calculus. It is used to derive leftover service curves as well as end-to-end performance bounds in the following sections. In the remainder of this paper we consider a discrete time model, i.e. time $t$ is a dimensionless counter of time slots each of fixed duration. Correspondingly, for the fBm traffic model (\ref{eq:aggr_A}) the rate $\lambda$ is given in bits per time slot and the increment $Z(t)$ in bits. We use subscripted $\varepsilon_p$, $\varepsilon_s$, and $\varepsilon_a$ to denote point-wise, sample path, and approximate or asymptotic overflow probabilities, respectively, see Sect. \ref{sec:RelatedWork}.
\subsection{Sample Path Envelope}
The difficulty of deriving a sample path envelope for fBm traffic is due to the intended integrability of the point-wise overflow probability $\int_0^{\infty}\varepsilon_p(t)dt$ as introduced in Sect. \ref{sec:RelatedWork}. To this end, we consider the point-wise envelope (\ref{eq:pointwisefbmenvelope}) with a time-dependent overflow probability $\varepsilon_p(t)$
\begin{equation}
E(t) = \lambda t + \sqrt{-2\log \varepsilon_p(t)} \sigma t^H.
\label{eq:samplepathfbmenvelope}
\end{equation}
This envelope can be obtained from Chernoff's bound. It constitutes the natural shape of an fBm envelope in the sense that it is the optimal solution that can be derived in this way.

\begin{figure}
\centering
\subfigure[The parameter $\beta$ relaxes the envelope. The case $\beta = 0$ coincides with (\ref{eq:pointwisefbmenvelope}).]{
\includegraphics[width=1.0\columnwidth]{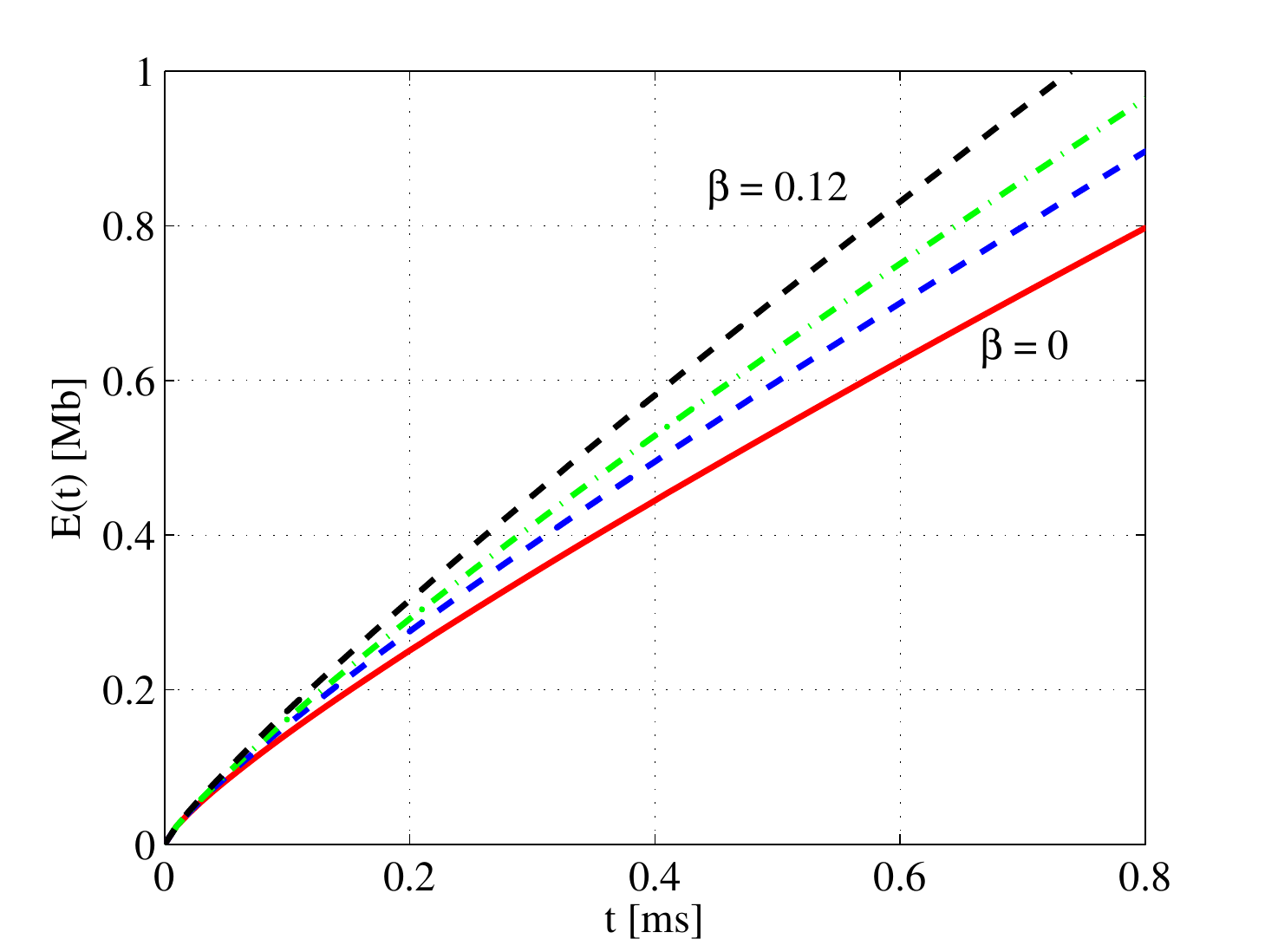}
\label{fig:env_dif_beta}
}
\subfigure[The point-wise overflow probability $\varepsilon_p(t) = \eta^{t^{2\beta}}$ decays faster with increasing $\beta$ but generally slower than exponentially since $\beta \in (0,1-H)$.]{
\includegraphics[width=1.0\columnwidth]{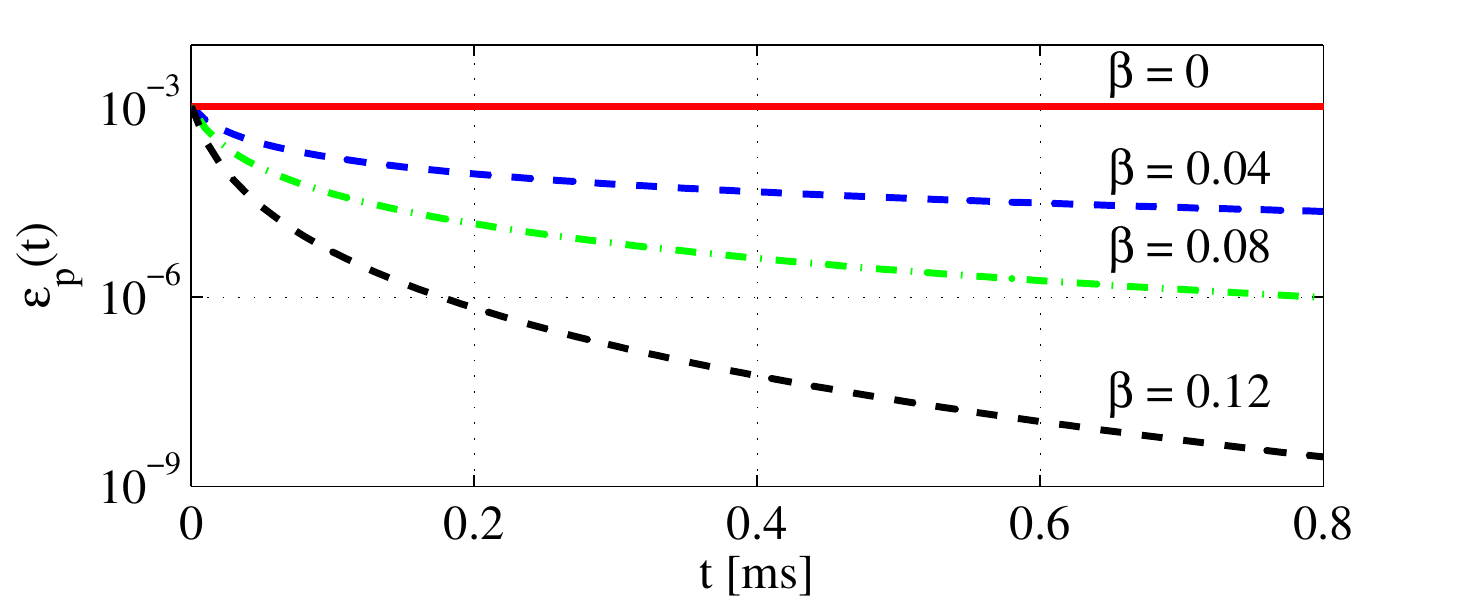}
\label{fig:pw_epsilon_diff_beta}
}
\caption{FBm envelopes according to (\ref{eq:samplepathfbmenvelope}).}
\label{fig:envelopes}
\end{figure}
Here, we have to choose the point-wise overflow probability $\varepsilon_p(t)$ in such a way that it is integrable. In fact twofold integrability is required to apply network calculus composition results for tandem systems in Sect. \ref{sec:EndToEndConcatenation}. At the same time $E(t)$ has to be at most linear in $t$ for $t \rightarrow \infty$ to be able to derive performance bounds, e.g. $\limsup_{t \rightarrow \infty} E(t)/t < C$ for a constant rate server with capacity $C$. This constrains $\varepsilon_p(t)$ to decay no faster than proportionally to $e^{-t^{2-2H}}$ for $t \rightarrow \infty$. We make the practicable choice\footnote{We note that other choices can be made, e.g. an envelope with linear slack rate $\varrho$, i.e. $E(t) = (\lambda + \varrho) t + \sqrt{-2\log \eta} \sigma t^H$, follows at the cost of significant complexity if we set $\varepsilon_p(t) = \exp{\bigl(-(\varrho t^{1-H} + \sqrt{-2 \log \eta} \sigma)^2 / (2 \sigma^2) \bigr)}.$} $\varepsilon_p(t) = \eta^{t^{2\beta}}$ with slack parameter $\beta \in (0,1-H)$ and $\eta \in (0,1)$ to derive an envelope for fBm sample paths. The integrability of $\varepsilon_p(t)$ is established by Lem. \ref{lem:gamma} in the appendix. For the special case $\beta = 0$ the parameter $\eta$ is the overflow probability of the point-wise envelope (\ref{eq:pointwisefbmenvelope}), i.e. $\varepsilon_p = \eta$ if $\beta = 0$.

Fig. \ref{fig:envelopes} evaluates different fBm envelopes. The fBm traffic parameters are $\lambda=0.5$, $\sigma=0.5$, and $H = 0.7$, i.e. the traffic is LRD as frequently observed for the Internet \cite{leland94,paxon95,crovella97,willinger97,feldmann:iptraffic}. To give the parameters a physical meaning we choose the time slot of the discrete time model to be 10 $\mu$s that is the transmission time of a 1250 Byte, respectively, 10 kb sized packet on a 1 Gb/s link. Accordingly, on a time slot basis $\lambda$ is 5 kb that translates to 0.5 Gb/s.

Fig. \ref{fig:env_dif_beta} compares fBm envelopes (\ref{eq:samplepathfbmenvelope}) with $\varepsilon_p = \eta^{t^{2\beta}}$ that are slackened by $\beta=$ 0.04, 0.08, and 0.12, respectively, to the envelope (\ref{eq:pointwisefbmenvelope}), i.e. the special case where $\beta=0$. The corresponding point-wise overflow probability $\varepsilon_p(t)$ is shown in Fig. \ref{fig:pw_epsilon_diff_beta}. Clearly, the overflow probability decays faster in case of larger $\beta$ whereas it remains constant and hence non-integrable if $\beta=0$. Generally, the overflow probability decays slower than exponentially for any choice of $\beta \in (0,1-H)$.
\begin{thm}[\textbf{FBM Sample Path Envelope}]
\label{theorem1}
Given fBm traffic with mean rate $\lambda$, standard deviation $\sigma$, and LRD Hurst parameter $H \in (\frac{1}{2},1)$.
\begin{equation*}
E(t) = \lambda t + \sqrt{-2 \log \eta} \sigma t^{H+\beta}
\end{equation*}
satisfies the definition of sample path envelope (\ref{eq:samplepathenvelope}) with overflow probability
\begin{equation*}
\varepsilon_s = \frac{\Gamma(\frac{1}{2\beta})}{2\beta (-\log \eta)^{\frac{1}{2\beta}}}
\end{equation*}
where $\beta \in (0,1-H)$ and $\eta \in (0,1)$ are free parameters.
\end{thm}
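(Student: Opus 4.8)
The plan is to build the sample path envelope from a time-dependent point-wise envelope via Boole's inequality, exactly along the template sketched in Sect.~\ref{sec:RelatedWork}. First I would check that the stated $E(t)$ is nothing but the point-wise envelope (\ref{eq:samplepathfbmenvelope}) specialized to the profile $\varepsilon_p(t) = \eta^{t^{2\beta}}$: substituting gives $\sqrt{-2\log\varepsilon_p(t)}\,\sigma t^H = \sqrt{-2 t^{2\beta}\log\eta}\,\sigma t^H = \sqrt{-2\log\eta}\,\sigma t^{H+\beta}$, which recovers the envelope in the theorem. So the whole task reduces to turning the point-wise guarantee with this particular decaying profile into a sample path guarantee.

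Next I would establish the point-wise bound $\mathsf{P}[A(\tau,t) - E(t-\tau) > 0] \le \eta^{(t-\tau)^{2\beta}}$ via Chernoff's bound. Writing $s = t-\tau$ and using that fBm has stationary Gaussian increments, $A(\tau,t) - \lambda s = Z(\tau,t)$ is zero-mean Gaussian with variance $\sigma^2 s^{2H}$, so the event equals $\{Z(\tau,t) > \sqrt{-2\log\eta}\,\sigma s^{H+\beta}\}$. Applying Chernoff's bound with the moment generating function $\mathsf{E}[e^{\theta Z(\tau,t)}] = e^{\theta^2\sigma^2 s^{2H}/2}$ and optimizing over $\theta > 0$ yields the Gaussian tail $e^{-a^2/(2\sigma^2 s^{2H})}$ at $a = \sqrt{-2\log\eta}\,\sigma s^{H+\beta}$, which collapses to $\eta^{s^{2\beta}}$ after cancellation.

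The remaining steps assemble the sample path statement. In the discrete-time model I would take Boole's inequality over the time points $\tau \in \{0,\dots,t\}$; the edge term $\tau = t$ contributes probability zero because $A(t,t) - E(0) = 0$, so the bound reduces to $\sum_{s=1}^{t} \eta^{s^{2\beta}}$. Since $\eta \in (0,1)$ makes $\eta^{s^{2\beta}}$ monotonically decreasing in $s$, the integral test gives $\sum_{s=1}^{t}\eta^{s^{2\beta}} \le \int_0^\infty \eta^{s^{2\beta}}\,ds$, and evaluating this integral (established in Lem.~\ref{lem:gamma}) through the substitution $u = (-\log\eta)\,s^{2\beta}$ produces the Gamma-function value $\varepsilon_s = \Gamma(\tfrac{1}{2\beta}) / (2\beta (-\log\eta)^{\frac{1}{2\beta}})$, uniformly in $t$ and hence valid as $t\to\infty$.

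The crux is not any single computation but the choice of the profile $\varepsilon_p(t) = \eta^{t^{2\beta}}$: the exponent $\beta > 0$ is precisely what makes $\int_0^\infty \varepsilon_p(s)\,ds$ finite, so that the union bound over an unbounded horizon stays bounded, whereas $\beta < 1-H$ keeps the stochastic term $t^{H+\beta}$ sub-linear for the downstream backlog and leftover-service-curve results. I expect the main obstacle to be controlling the passage from the discrete union bound to its integral majorant: verifying the monotonicity that licenses the integral test, correctly discarding the $s=0$ term, and confirming that the tail of the sum is summable so that the resulting constant $\varepsilon_s$ is genuinely independent of $t$.
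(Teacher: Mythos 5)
Your proposal is correct and follows essentially the same route as the paper's proof: Chernoff's bound on the Gaussian fBm increments (equivalently, via the effective bandwidth $\mathsf{E}[e^{\theta A(\tau)}] = e^{\theta\lambda\tau + \theta^2\sigma^2\tau^{2H}/2}$) yields the point-wise overflow probability $\eta^{s^{2\beta}}$, after which Boole's inequality, the monotonicity-based integral majorant, and Lem.~\ref{lem:gamma} give $\varepsilon_s$ exactly as in the paper. Your handling of the zero-probability edge term and the uniformity in $t$ matches the paper's treatment (which phrases it via stationarity and $t \to \infty$), so there is no gap.
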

\begin{proof}
Assuming stationarity of the arrivals and letting $t \rightarrow \infty$ the sample path envelope (\ref{eq:samplepathenvelope}) can be written as
\begin{equation}
\mathsf{P} \biggl[ \sup_{\tau \ge 0} \{A(\tau) - E(\tau)\} > 0 \biggr] \le \varepsilon_s.
\label{eq:samplepath}
\end{equation}
The overflow probability is defined by a union of events
\begin{equation*}
\mathsf{P} \biggl[ \sup_{\tau \ge 0} \{A(\tau) - E(\tau)\} >0 \biggr] = \mathsf{P} \biggl[\bigcup_{\tau = 0}^{\infty} \{ A(\tau) - E(\tau) > 0 \} \biggr]
\end{equation*}
that can be estimated by application of Boole's inequality
\begin{equation*}
\mathsf{P} \biggl[\bigcup_{\tau = 0}^{\infty} \{ A(\tau) > E(\tau) \} \biggr] \le \sum_{\tau = 1}^{\infty} \mathsf{P} [ A(\tau) > E(\tau)]
\end{equation*}
where we used the fact that the overflow probability at $\tau = 0$ is trivially zero since by definition $A(0) = E(0) = 0$.

Applying Chernoff's bound and using the effective bandwidth of fBm (\ref{eq:fbmeffbw}) we have for any $\theta > 0$ that
\begin{equation*}
\mathsf{P}[A(\tau) \!>\! E(\tau)] \le e^{-\theta E(\tau)} \mathsf{E}\bigl[e^{\theta A(\tau)}\bigr] = e^{-\theta E(\tau)} e^{\theta \lambda \tau + \frac{\theta^2 \sigma^2}{2} \tau^{2H}}\!\!.
\end{equation*}
Inserting $E(\tau) = \lambda \tau + \sqrt{-2 \log \eta} \sigma \tau^{H+\beta}$ and minimizing over $\theta > 0$ yields $\theta =\frac{1}{\sigma} \sqrt{-2 \log \eta} \tau^{\beta - H}$ and by insertion
\begin{equation*}
\mathsf{P}[A(\tau) > E(\tau)] \le \eta^{\tau^{2\beta}}.
\end{equation*}
Summing the point-wise overflow probabilities gives
\begin{equation*}
\sum_{\tau=1}^{\infty} \eta^{\tau^{2\beta}} \le \int_{0}^{\infty} \eta^{\tau^{2\beta}} d\tau = \frac{\Gamma(\frac{1}{2\beta})}{2\beta (-\log \eta)^{\frac{1}{2\beta}}}
\end{equation*}
where we used that $\eta^{\tau^{2\beta}}$ is monotonically decreasing in $\tau$ to estimate each summand indexed by $\tau$ by an integral over $(\tau-1,\tau]$. Finally, we applied Lem. \ref{lem:gamma} from the appendix.
\end{proof}

\begin{figure}
\centering
\subfigure[Point-wise overflow probability $\varepsilon_p(t) = \eta^{t^{2\beta}}$ vs. simulation results. The overflow probabilities are conservative due to the use of Chernoff's bound.]{
\includegraphics[width=1.0\columnwidth]{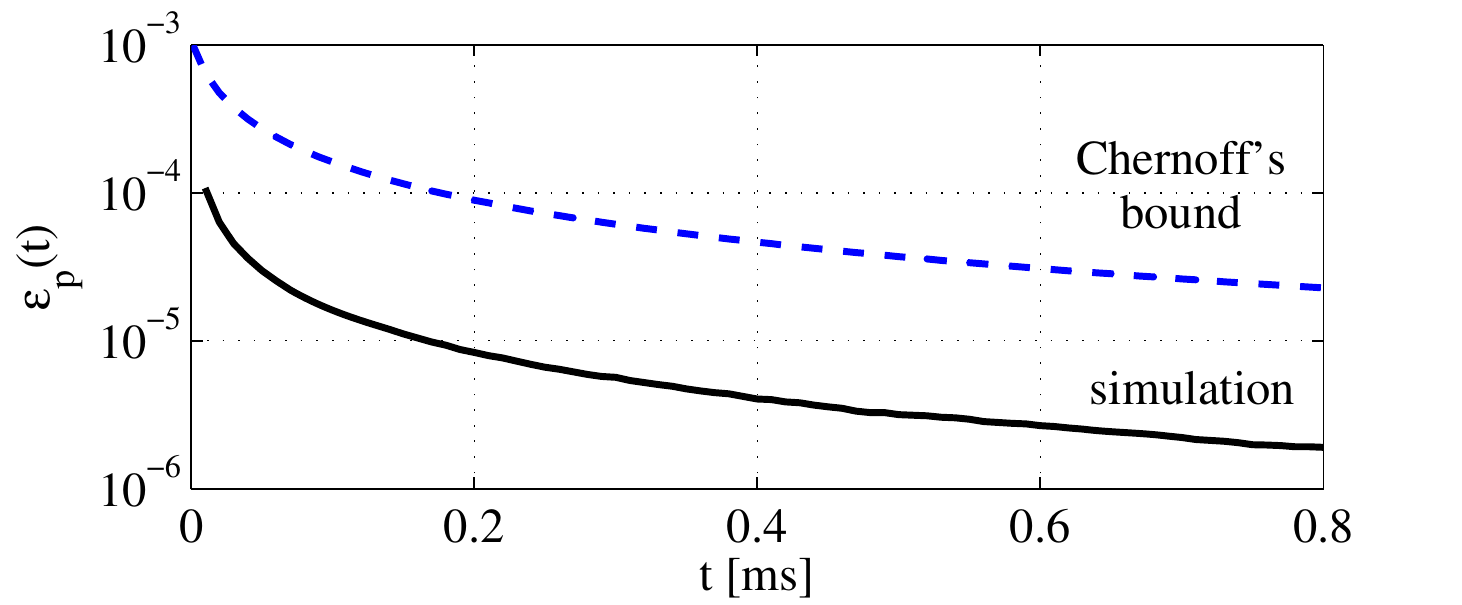}
\label{fig:pw_epsilon_diff_beta_sim}
}
\subfigure[Sample path overflow probability $\varepsilon_s$ vs. simulation results for sample paths of length $t$. The overflow probabilities are strict upper bounds due to the sample path argument using Boole's inequality. For $t \rightarrow \infty$ the sample path overflow probability stays finite and converges to the result obtained from Th. \ref{theorem1}.]{
\includegraphics[width=1.0\columnwidth]{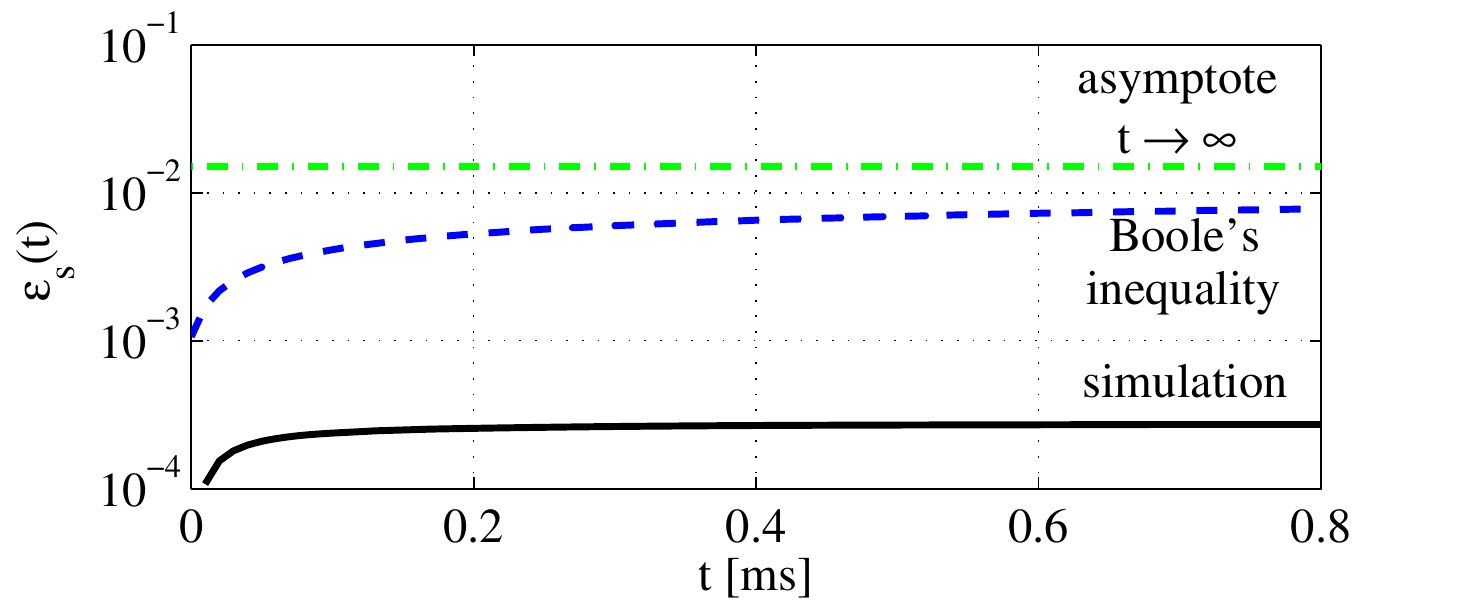}
\label{fig:sp_epsilon_diff_beta_sim}
}
\caption{Overflow probabilities for envelopes from Th. \ref{theorem1} where $\beta = 0.04$.}
\end{figure}
Fig. \ref{fig:pw_epsilon_diff_beta_sim} compares the point-wise overflow probability $\varepsilon_p(t) = \eta^{t^{2\beta}}$ of the envelope from Fig. \ref{fig:env_dif_beta} for $\beta = 0.04$ with simulation results obtained from $10^9$ fBm sample paths generated with Matlab. The overflow probabilities are computed from Chernoff's bound and hence are conservative. Fig. \ref{fig:sp_epsilon_diff_beta_sim} shows the overflow probability for sample paths of length $t$. The simulation results are obtained by counting the number of sample paths that violate the envelope at least once in $[0,t]$. To derive the analytical bound we sum the point-wise overflow probabilities over all $\tau \in [0,t]$. The dashed horizontal line shows the overflow probability of the sample path envelope from Th. \ref{theorem1} that holds for sample paths of any length $t \ge 0$, i.e. the sum of the point-wise overflow probabilities stays finite for all $t$ and converges to the dashed line for $t \rightarrow \infty$.
\subsection{Performance Bounds}
\label{sec:performancebounds}
We use the sample path envelope from Th. \ref{theorem1} to derive performance bounds for a server fed with fBm traffic. Owing to a known duality between backlog bounds and envelopes we find affine envelopes for fBm traffic along the way, see Sect. \ref{sec:affineenvelopes}.
\begin{thm}[\textbf{Backlog and Delay Bound}]
\label{thm:backlog}
Consider a lossless work-conserving constant rate server with capacity $C$ fed with fBm traffic as in Th. \ref{theorem1} where $C > \lambda$. The steady state backlog $B$ is bounded by $b$ subject to the overflow probability $\varepsilon_s$
\begin{equation*}
\mathsf{P}[B > b] \le \varepsilon_s = \frac{\Gamma(\frac{1}{2\beta})}{2\beta (-\log\eta)^{\frac{1}{2\beta}}}
\end{equation*}
where $\beta \in (0,1-H)$ is a free parameter and
\begin{equation*}
\eta = \exp \Biggl(\! -\frac{1}{2\sigma^2} \! \left(\frac{C-\lambda}{H+\beta}\right)^{\!2(H+\beta)} \!\! \left(\frac{b}{1-(H+\beta)}\right)^{\!2-2(H+\beta)} \! \Biggr) .
\end{equation*}
The steady state delay under first-come first-serve (fcfs) scheduling $W$ is bounded by $\mathsf{P} [W > b/C] \le \varepsilon_s$.
\end{thm}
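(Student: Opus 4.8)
The plan is to combine Reich's equation (\ref{eq:backlogbound}) with the sample path envelope of Th. \ref{theorem1}, following the standard network calculus argument that a sample path bound on arrivals translates directly into a backlog bound. First I would fix the form of the steady state backlog. Assuming stationarity and letting $t \rightarrow \infty$ in (\ref{eq:backlogbound}), substituting the interval width $s = t-\tau$, the backlog becomes $B = \sup_{s \ge 0}\{A(s) - Cs\}$, so that $\mathsf{P}[B > b] = \mathsf{P}[\sup_{s \ge 0}\{A(s) - Cs\} > b]$, which is exactly the form to which the envelope applies.

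The conceptual crux is to dominate this event by an envelope violation event. Let $b^\ast = \sup_{s \ge 0}\{E(s) - Cs\}$ with $E$ the envelope of Th. \ref{theorem1}. If a sample path satisfies $A(s) \le E(s)$ for all $s$, then $A(s) - Cs \le E(s) - Cs \le b^\ast$ for all $s$ and hence $B \le b^\ast$. Taking the contrapositive gives the containment $\{B > b^\ast\} \subseteq \{\sup_{s \ge 0}\{A(s) - E(s)\} > 0\}$, so by Th. \ref{theorem1} together with (\ref{eq:samplepath}) I obtain $\mathsf{P}[B > b^\ast] \le \varepsilon_s$ with the $\varepsilon_s$ stated in the theorem. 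It then remains only to verify that the stated choice of $\eta$ is precisely the one that makes $b = b^\ast$.

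To this end I would evaluate $b^\ast = \sup_{s \ge 0}\{\sqrt{-2\log\eta}\,\sigma s^{H+\beta} - (C-\lambda)s\}$, where the mean rate contribution $\lambda s$ has cancelled against part of $Cs$ and the hypothesis $C > \lambda$ guarantees a net negative linear term. Here the restriction $\beta \in (0,1-H)$ enters decisively: since $H+\beta < 1$ the power term is concave and sublinear, so the supremum over the positive half line is finite and attained at an interior point. Differentiating and setting the derivative to zero yields the maximiser $s^\ast = (\sqrt{-2\log\eta}\,\sigma(H+\beta)/(C-\lambda))^{1/(1-H-\beta)}$; back-substitution expresses $b^\ast$ as an explicit function of $\eta$, and inverting this relation for $-\log\eta$ reproduces the stated expression for $\eta$. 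This inversion, a matching of power-law exponents that collapses to the factors $(\tfrac{C-\lambda}{H+\beta})^{2(H+\beta)}$ and $(\tfrac{b}{1-(H+\beta)})^{2-2(H+\beta)}$, is the only genuinely laborious step and is the main obstacle I expect, though it is routine algebra once $s^\ast$ is in hand.

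Finally, for the delay bound I would use that at a lossless work-conserving constant rate server under fcfs a bit arriving at time $t$ must wait for the backlog $B(t)$ present ahead of it to drain at the fixed rate $C$, so that its delay equals $B(t)/C$. Consequently $\{W > b/C\} = \{B > b\}$, and therefore $\mathsf{P}[W > b/C] = \mathsf{P}[B > b] \le \varepsilon_s$, which completes the argument.
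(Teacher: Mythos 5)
Your proposal is correct and follows essentially the same route as the paper's proof: reduce $\mathsf{P}[B>b]$ to the sample path envelope event of Th.~\ref{theorem1}, then choose $\eta$ so that $E(t)$ and $b+Ct$ are tangent, which is exactly your condition $b=b^\ast=\sup_{s\ge0}\{E(s)-Cs\}$, with the same maximiser $s^\ast=\tau^\ast$ from (\ref{eq:overflowtimescale}) and the same inversion yielding $\eta$. Your delay argument via $W=B/C$ at a constant-rate fcfs server is an equivalent (slightly more direct) rendering of the paper's horizontal-deviation step.
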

\begin{proof}
Assuming stationarity and letting $t \rightarrow \infty$ we obtain the steady state backlog from (\ref{eq:backlogbound}) as
\begin{equation*}
\mathsf{P}[B > b] = \mathsf{P} \biggl[ \sup_{\tau \ge 0} \{ A(\tau) - C \tau \} > b \biggr].
\end{equation*}
The expression is a special case of sample path envelope (\ref{eq:samplepath}). Define $E(t)$ as in Th. \ref{theorem1}. If $E(t) \le b + Ct$ for all $t \ge 0$ then
\begin{equation*}
\mathsf{P}[B > b] \le \mathsf{P} \biggl[ \sup_{\tau \ge 0} \{A(\tau) - E(\tau)\} > 0 \biggr] \le \frac{\Gamma(\frac{1}{2\beta})}{2\beta (-\log\eta)^{\frac{1}{2\beta}}} .
\end{equation*}

Given $b$ and $C$ we derive the largest envelope that satisfies the constraint $E(t) \le b + Ct$ for all $t \ge 0$. To this end, we first find $t = \tau^*$ that minimizes the vertical distance between $E(t)$ and $b+Ct$. Hence, $\tau^*$ is the solution of $\partial E(t)/\partial t = C$. Then, we choose the parameter $\eta \in (0,1)$ such that $E(t)$ and $b+Ct$ are tangent to each other at $\tau^*$, i.e. $E(\tau^*) = b+C\tau^*$.
From the first requirement we derive $\tau^*$ as
\begin{equation}
\tau^* = \left(\frac{\sqrt{-2\log\eta} \sigma (H+\beta)}{C-\lambda}\right)^{\frac{1}{1-(H+\beta)}}.
\label{eq:overflowtimescale}
\end{equation}
By insertion of $\tau^*$ into the condition $E(\tau) = b+C\tau$ the optimal parameter $\eta$ follows as given in Th. \ref{thm:backlog}.

The delay bound can be derived as the maximum horizontal distance of $E(t)$ and $Ct$ using the same basic steps.
\end{proof}

An approach related to Th. \ref{thm:backlog} has been employed in \cite{fonseca00} using, however, the point-wise fBm envelope (\ref{eq:pointwisefbmenvelope}) and the principle of the largest term (\ref{eqn:change_P_sup}) to compute an approximate backlog bound. In contrast, our sample path envelope does not resort to this approximation and establishes rigorous upper bounds.

We recover previous results for the special case where $\beta = 0$. In this case the optimal parameter $\eta$ derived in Th. \ref{thm:backlog} is equal to the overflow probability of the known asymptotic backlog bound (\ref{eq:norros_bnd}), i.e. $\eta = \varepsilon_a$. Recall that (\ref{eq:norros_bnd}) has been derived using either the Gaussian distribution of the fBm increments \cite{norros94,norros95}, large deviations theory \cite{duffield94}, or the point-wise envelope (\ref{eq:pointwisefbmenvelope}) \cite{fonseca00}. All these derivations are based on the approximation by the largest term (\ref{eqn:change_P_sup}). Similarly, $\eta$ is the point-wise overflow probability of the fBm envelope $E(t)$ from (\ref{eq:pointwisefbmenvelope}), i.e. $\eta = \varepsilon_p$, where $E(t)$ is the largest envelope that is smaller than $b+Ct$ for all $t \ge 0$. Furthermore, $E(t)$ is tangent at $\tau^*$ such that $b + Ct$ generally has an overflow probability smaller than $\eta$ except at $\tau^*$ where it attains its maximum that equals $\eta$. Hence, the supremum in (\ref{eqn:change_P_sup}) is attained at $\tau^*$ and the boundary point at $\tau^*$ is the most probable point for violation of $b + Ct$, i.e. for buffer overflow. From (\ref{eq:overflowtimescale}) the most probable time scale for violation evaluates by insertion of $\eta$ from Th. \ref{thm:backlog} and $\beta = 0$ to
\begin{equation*}
\tau^* = \biggl(\frac{b}{C-\lambda}\biggr) \biggl(\frac{H}{1-H}\biggr)
\end{equation*}
recovering the result from \cite{norros94}. Related probabilistic bounds for the duration of busy periods $\tau'$ at a constant rate server fed with fBm traffic have been derived in \cite{mayor:fbmtimescale} as the point in time where the point-wise envelope (\ref{eq:pointwisefbmenvelope}) equals $C t$
\begin{equation*}
\tau' = \biggl( \frac{\sqrt{-2 \log \varepsilon_p} \sigma}{C-\lambda} \biggr)^{\frac{1}{1-H}}.
\end{equation*}
It is interesting to observe the large impact of $H$ on the relevant time scales that is unfavorable for the use time scale bounds.

\begin{figure}
\centering
\includegraphics[width=1.0\columnwidth]{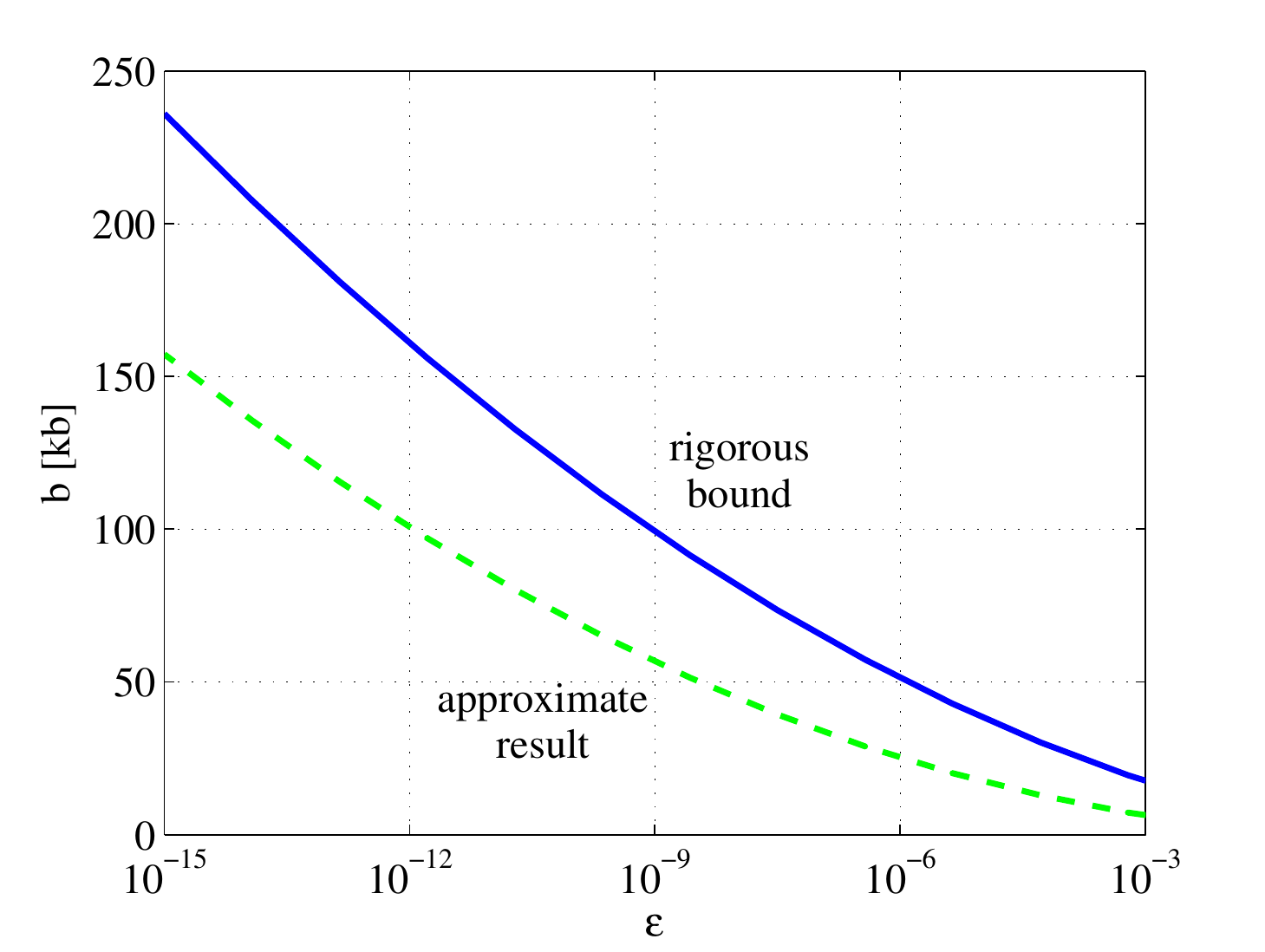}
\caption{Rigorous backlog bound from Th. \ref{thm:backlog} in comparison to the approximation resp. asymptotic result from (\ref{eq:norros_bnd}) for fBm traffic with parameters $\lambda = 0.5$ Gb/s, $\sigma = 0.25$ Gb/s, and $H=0.75$ at a server with capacity $C = 1$ Gb/s. Both bounds decay slower than exponentially due to LRD.}
\label{fig:backlbnd_diff_eps}
\end{figure}
Fig. \ref{fig:backlbnd_diff_eps} displays backlog bounds for a server with capacity $C = 1$ Gb/s fed with fBm traffic with parameters $\lambda = 0.5$ Gb/s, $\sigma = 0.25$ Gb/s, and $H=0.75$. The parameter set agrees with observations in \cite{norros95} and will be used in the sequel unless mentioned otherwise. The overflow probabilities are obtained from Th. \ref{thm:backlog} using sample path arguments (solid line) and from (\ref{eq:norros_bnd}) using the principle of the largest term (dashed line), respectively. We optimize parameter $\beta$ numerically. Both results agree with each other regarding the slower than exponential decay that is due to LRD. The absolute values differ, however, by a factor of about 1.75 at $\varepsilon = 10^{-9}$ due to the fact that Th. \ref{thm:backlog} provides a rigorous upper bound whereas (\ref{eq:norros_bnd}) is an approximation. The relative difference of the two results decreases for smaller $\varepsilon$.

\begin{figure}
\centering
\subfigure[Impact of mean rate $\lambda$]{
\includegraphics[width=0.46\columnwidth]{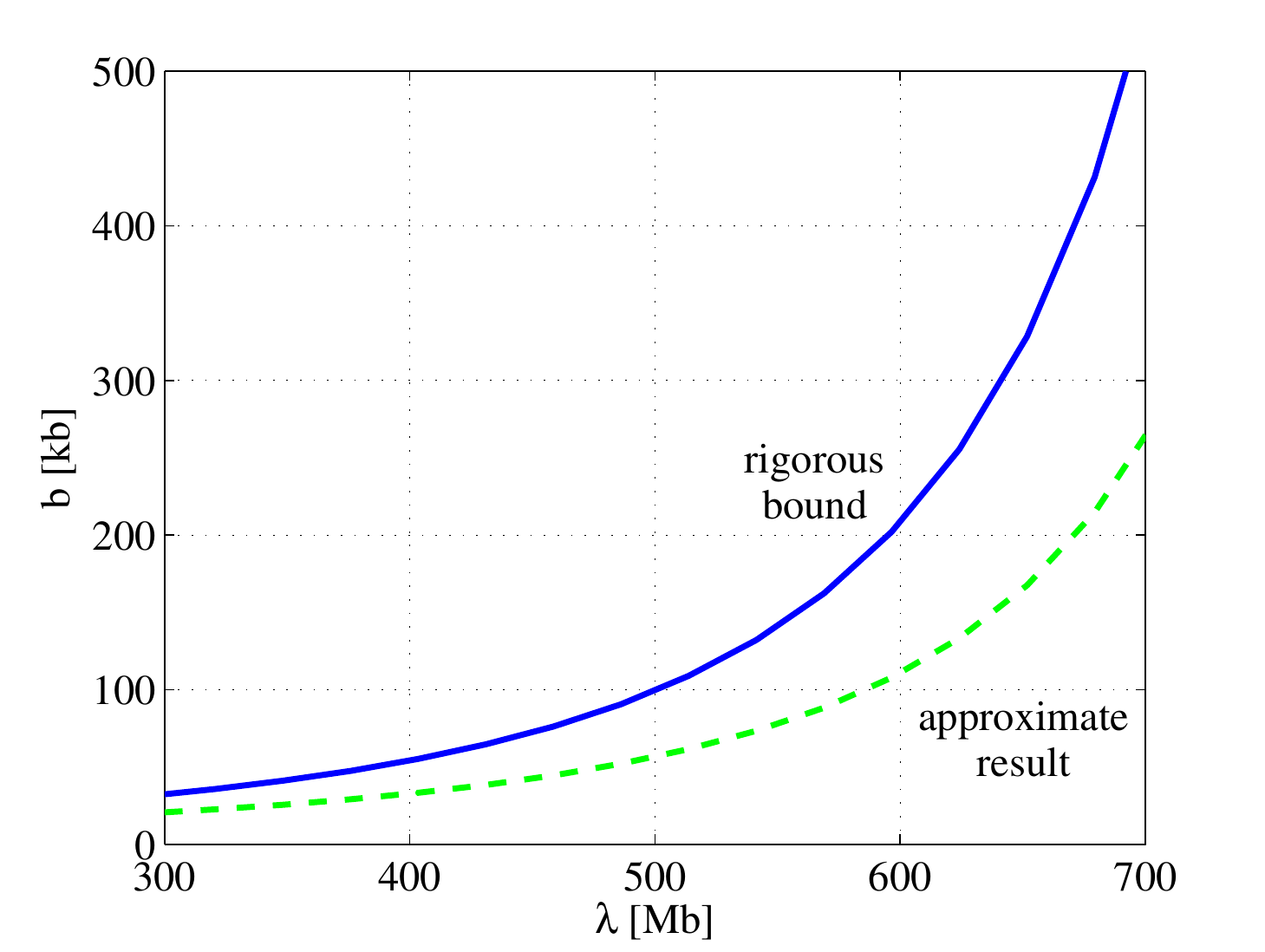}
\label{fig:backlbnd_diff_lambda}
}
\subfigure[Impact of standard deviation $\sigma$]{
\includegraphics[width=0.46\columnwidth]{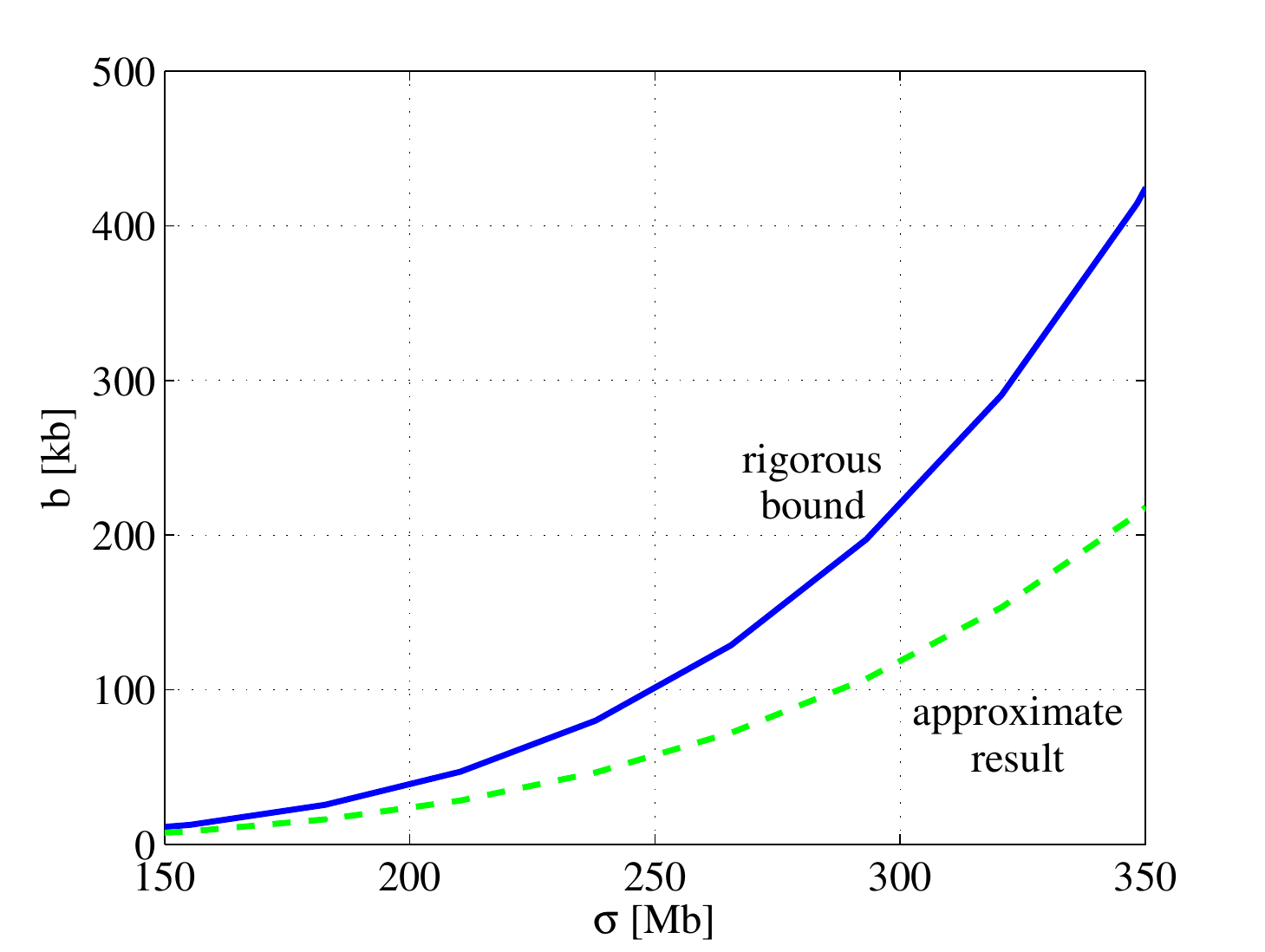}
\label{fig:backlbnd_diff_sigma}
}
\subfigure[Impact of $H$, moderate $H$]{
\includegraphics[width=0.46\columnwidth]{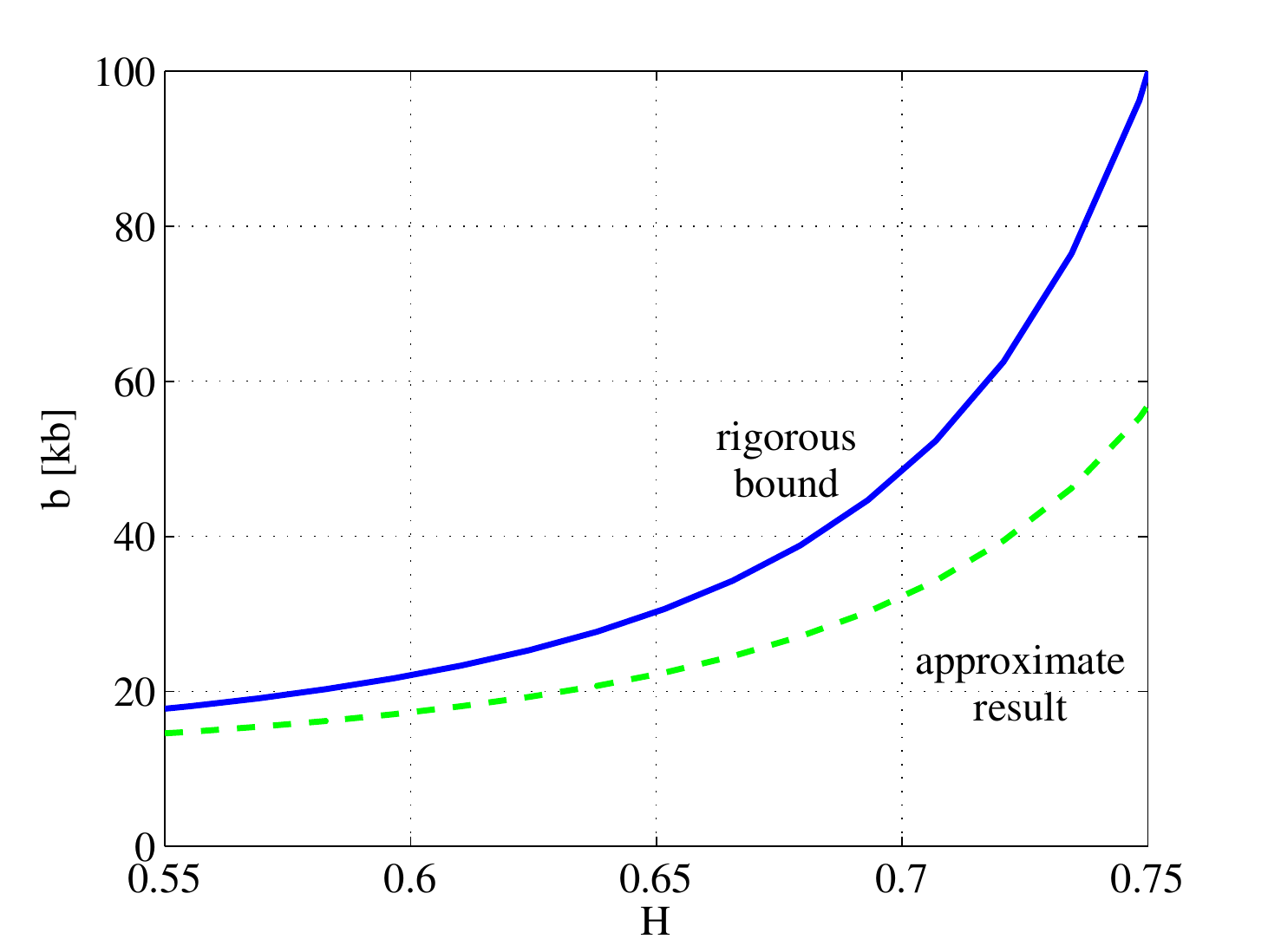}
\label{fig:backlbnd_diff_H_lin}
}
\subfigure[Impact of $H$, large $H$]{
\includegraphics[width=0.46\columnwidth]{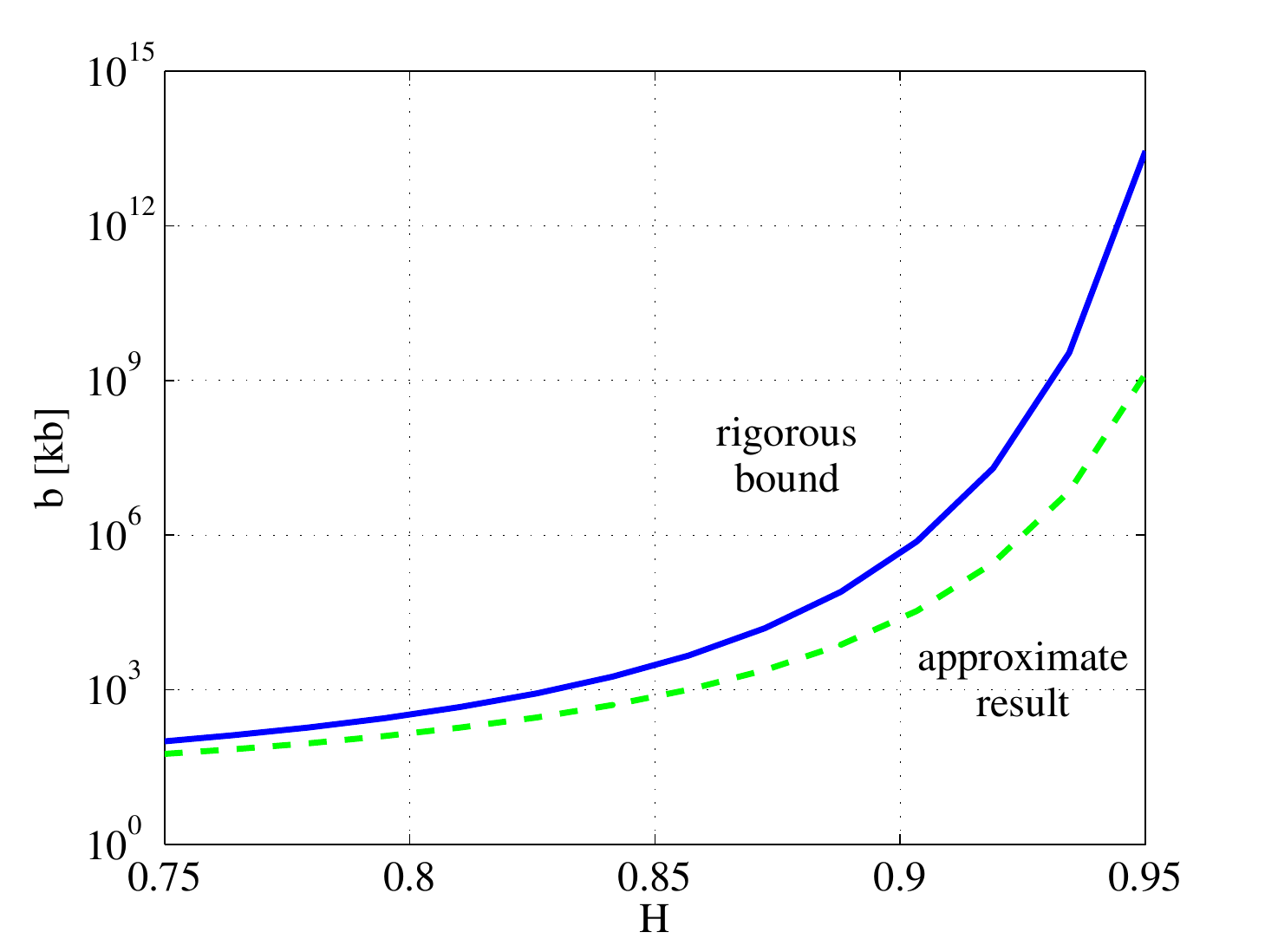}
\label{fig:backlbnd_diff_H_log}
}
\caption{Effect of different traffic parameters on backlog bounds as in Fig. \ref{fig:backlbnd_diff_eps} with overflow probability $\varepsilon = 10^{-9}$. Note the tremendous impact of the Hurst parameter, hence Fig. \ref{fig:backlbnd_diff_H_log} is on log scale.}
\label{fig:backlbnd_diff_pars}
\end{figure}
Fig. \ref{fig:backlbnd_diff_pars} shows how a variation of the different parameters of the fBm traffic affects the backlog bounds. Clearly, for larger $\lambda$, $\sigma$, or $H$ both backlog bounds increase. For moderate $H$ the rigorous and the approximate bounds agree well, whereas large $H$ have a huge impact, hence Fig. \ref{fig:backlbnd_diff_H_log} is on log scale.

Fig. \ref{fig:backlbnd_diff_sigma} can also be interpreted as showing the effect of statistical multiplexing of fBm flows. Given $m$ statistically independent fBm flows each normalized by $m$, i.e. with rate $\lambda/m$ and standard deviation $\sigma/m$ and identical Hurst parameter, the aggregate traffic of all $m$ flows has rate $\lambda$ but standard deviation $\sigma/\sqrt{m}$. Statistical multiplexing of fBm flows is discussed in detail in \cite{fonseca00}.

\subsection{Weibullian Decay of Overflow Probabilities}
\label{sec:nearoptimalbeta}
In this subsection we show that overflow probabilities of fBm sample path bounds have a Weibull tail in $b$, i.e. $\log \varepsilon_s \sim -b^{2-2H}$. To this end, we minimize $\varepsilon_s$ from Th. \ref{thm:backlog} over $\beta$.

To obtain the results shown in Fig. \ref{fig:backlbnd_diff_eps} and Fig. \ref{fig:backlbnd_diff_pars} from Th. \ref{thm:backlog} we optimized the free parameter $\beta \in (0, 1-H)$ numerically. We note that the optimal choice of $\beta$ has significant impact, e.g. on the decay of $\varepsilon$ with $b$. Typically, we find that the optimal $\beta$ is small compared to one. In this case a useful approximation of the Gamma bound in Th. \ref{theorem1} and Th. \ref{thm:backlog} is
\begin{equation}
\tilde{\varepsilon}_s = \frac{\sqrt{\pi}}{\sqrt{\beta}(2e \beta (-\log \eta))^{\frac{1}{2\beta}}} \;\, \text{ for } \;\, \beta \ll 1
\label{eq:stirlingapprox}
\end{equation}
using Stirling's formula $\Gamma(x) \approx \sqrt{2 \pi / x} \; (x/e)^x$ for $x \gg 1$. The approximation (\ref{eq:stirlingapprox}) is exact in the limit $\beta \rightarrow 0$.

In the remainder of this subsection we optimize the free parameter $\beta \in (0,1-H)$. We use (\ref{eq:stirlingapprox}) and minor simplifications for small $\beta$ to derive a near optimal solution $\beta^*$. We emphasize that Th. \ref{thm:backlog} holds for any $\beta \in (0,1-H)$ where we approximate $\beta^*$ that minimizes the bound, i.e. inserting $\beta^*$ into Th. \ref{thm:backlog} yields a rigorous upper bound that is close to its minimal solution.

Assuming $\beta \ll (1-H)$ and approximating $1-H-\beta$ by $1-H$ we compute the derivative of (\ref{eq:stirlingapprox}) and solve $\partial \tilde{\varepsilon}_s/\partial \beta = 0$ for $\beta$. we find that the minimum of (\ref{eq:stirlingapprox}) is approached at $\beta = - W(1/(2 \log \varepsilon_a))$ where $\varepsilon_a$ coincides with (\ref{eq:norros_bnd}). $W(z)$ denotes Lambert's W function that is the inverse of $z = x e^x$. It is real-valued for $\varepsilon_a < e^{-e/2}$. Since $\beta$ is assumed to be small a good approximation of the optimal solution is
\begin{equation*}
\beta^* = \frac{1}{2 (-\log\varepsilon_a)}
\end{equation*}
where we estimate the Lambert W function by a linear segment. Note that $\beta^*$ decreases with $b$. We define the quotient
\begin{equation}
\chi = \biggl(\frac{H^{H} (1-H)^{1-H}}{(H+\beta)^{H+\beta} (1-(H+\beta))^{1-(H+\beta)}}\biggr)^2
\label{eq:chi}
\end{equation}
that is in $[\frac{1}{4},1]$ for $H \in (\frac{1}{2},1)$, $\beta \in (0,1-H)$ and approaches 1 for small $\beta$. Inserting $\beta^*$ into (\ref{eq:stirlingapprox}) we find the closed form
\begin{equation}
\tilde{\varepsilon}_s = \frac{b}{C-\lambda} \, \varepsilon_a^{1+\log\chi} \sqrt{2 \pi (- \log \varepsilon_a)}
\label{eq:samplepathapprox}
\end{equation}
as a near optimal solution of Th. \ref{thm:backlog}. We note that (\ref{eq:samplepathapprox}) matches the numerically optimized results in Fig. \ref{fig:backlbnd_diff_eps} and \ref{fig:backlbnd_diff_pars} almost perfectly which is why we omit reproducing similar graphs.

Phrasing $\varepsilon_s$ as a function of $\varepsilon_a$ enables us to support significant conclusions on the dimensioning of networks from the largest term approximation by sample path arguments. An important example is the relation of spare capacity $C - \lambda$ and buffer size $b$. While for $H=\frac{1}{2}$ halving $C-\lambda$ requires doubling $b$ to achieve constant $\varepsilon_a$ (\ref{eq:norros_bnd}) the tradeoff deteriorates for large $H$ \cite{norros95}. Under LRD spare capacity becomes more important and buffering much less efficient supporting current arguments for reducing router buffers \cite{appenzeller:sizingbuffers}.

Inserting $\varepsilon_a$ from (\ref{eq:norros_bnd}) into (\ref{eq:samplepathapprox}), bounding $\chi$ from below by a constant that is close to 1 for $\beta \ll 1-H$, and using positive constants $c_i$ yields after simplification that
\begin{equation*}
\log \tilde{\varepsilon}_s = - c_1 b^{2-2H} + \log(c_2 b^{2-H}) + c_3.
\end{equation*}
Based on sample path arguments the expression reflects the decay rate of the largest term approximation (\ref{eq:norros_bnd}).

To illustrate the tail behavior we compare the overflow probability of backlog bounds for fBm traffic to traffic with exponentially bounded burstiness (EBB) \cite{sidi93}. For generation of EBB traffic we use a discrete time Markov model with two states: on and off. In the off state (state 1) no traffic is generated and in the on state (state 2) traffic is generated with peak rate $P$. The steady state probability of the on state is $p_{\text{on}} = p_{12}/(p_{12} + p_{21})$ where $p_{ij}$ are the state transition probabilities from state $i$ to state $j$. The mean rate becomes $\lambda = p_{\text{on}} P$. As \cite{ciucu06} we characterize the burstiness of an on-off source by the average time for two state changes to occur $T = 1/p_{12} + 1/p_{21}$. The aggregate of $m$ on-off sources has a point-wise envelope (\ref{eq:pointwiseenvelopeprofile}) given as $E(t) = m \rho(\theta) t$ with overflow profile $\varepsilon_p(b) = e^{-\theta b}$, see \cite{ciucu06}. The envelope rate $\rho(\theta)$ of a single discrete time on-off source is given in \cite{chang00} as
\begin{equation*}
\frac{1}{\theta} \! \log \! \Biggl( \! \frac{p_{11} \! + \! p_{22} e^{\theta P} \! + \! \sqrt{(p_{11} \! + \! p_{22} e^{\theta P})^2 \! - \! 4(p_{11} \! + \! p_{22} \! - \! 1)e^{\theta P}}}{2} \Biggr)
\end{equation*}
for $\theta > 0$ where $\rho(\theta)$ ranges between the mean and the peak rate of the source. Employing the envelope a backlog bound follows from (\ref{eq:backlogbound}) by application of Boole's inequality and integration of the overflow profile, see Sect. \ref{sec:RelatedWork} and \cite{ciucu06} for details. Given aggregate traffic generated by $m$ on-off sources fed into a server with capacity $C$ the overflow probability $\mathsf{P} [B > b] \le \varepsilon_s$ of the backlog bound $b$ becomes
\begin{equation}
\varepsilon_s = \int_0^{\infty} e^{-\theta (b + (C - m\rho(\theta)) \tau)} d\tau = \frac{e^{-\theta b}}{\theta (C-m\rho(\theta))}
\label{eq:ebbbacklog}
\end{equation}
for any $\theta > 0$ that satisfies $m \rho(\theta) < C$.

For Fig. \ref{fig:fbmvsebbtail} we employ the traffic parameters given in Tab. \ref{tab:parametersthrough}. We use a traffic aggregate consisting of 100 EBB flows. In case of fBm we use only a single flow. Recall that fBm is typically used as a model for aggregate traffic, e.g. it has been related to the superposition of on-off sources with heavy-tailed on and off periods \cite{taqqu97,crovella97,willinger97}. In contrast the on-off sources used for EBB traffic have geometrically distributed on and off periods. The fBm traffic has the same parameters as used for Fig. \ref{fig:backlbnd_diff_eps} and Fig. \ref{fig:backlbnd_diff_pars}.

Fig. \ref{fig:fbmvsebbtail} clearly shows the Weibullian decay of the fBm backlog bound (\ref{eq:samplepathapprox}) with $\log \varepsilon_s \sim -b^{2-2H}$ that becomes slower with increasing $H$, i.e. with more pronounced LRD even very large buffers are filled with a certain probability. In contrast, the decay is exponentially fast in case of EBB traffic where the correlation parameter $T$ affects only the slope. From Fig. \ref{fig:fbmvsebbtail} we conclude that reasonable performance guarantees for EBB as well as fBm traffic exist if a certain violation probability is tolerated. A near-deterministic service with very small violation probability is, however, only practicable for EBB traffic. Due to the slow tail decay it is very costly to reduce the overflow probability for fBm traffic with LRD. We mention that the approximation (\ref{eq:samplepathapprox}) as before agrees well with Th. \ref{thm:backlog}. For EBB we optimized the parameter $\theta$ numerically.
\begin{table}
\begin{center}
\caption{Traffic parameters used for Fig. \ref{fig:fbmvsebbtail}.}
\label{tab:parametersthrough}
\begin{tabular}{|c|cccc|} \hline
type & \# sources & mean rate & variability & correlation \\ \hline \hline
EBB & $m = 100$ & $\lambda = 5$ Mb/s & $P = 5 \lambda$ & $T$  \\
fBm & $m=1$ & $\lambda = 500$ Mb/s & $\sigma = \lambda/2$ & $H$ \\ \hline
\end{tabular}
\end{center}
\end{table}
\begin{figure}
\begin{center}
\includegraphics[width=1.0\columnwidth]{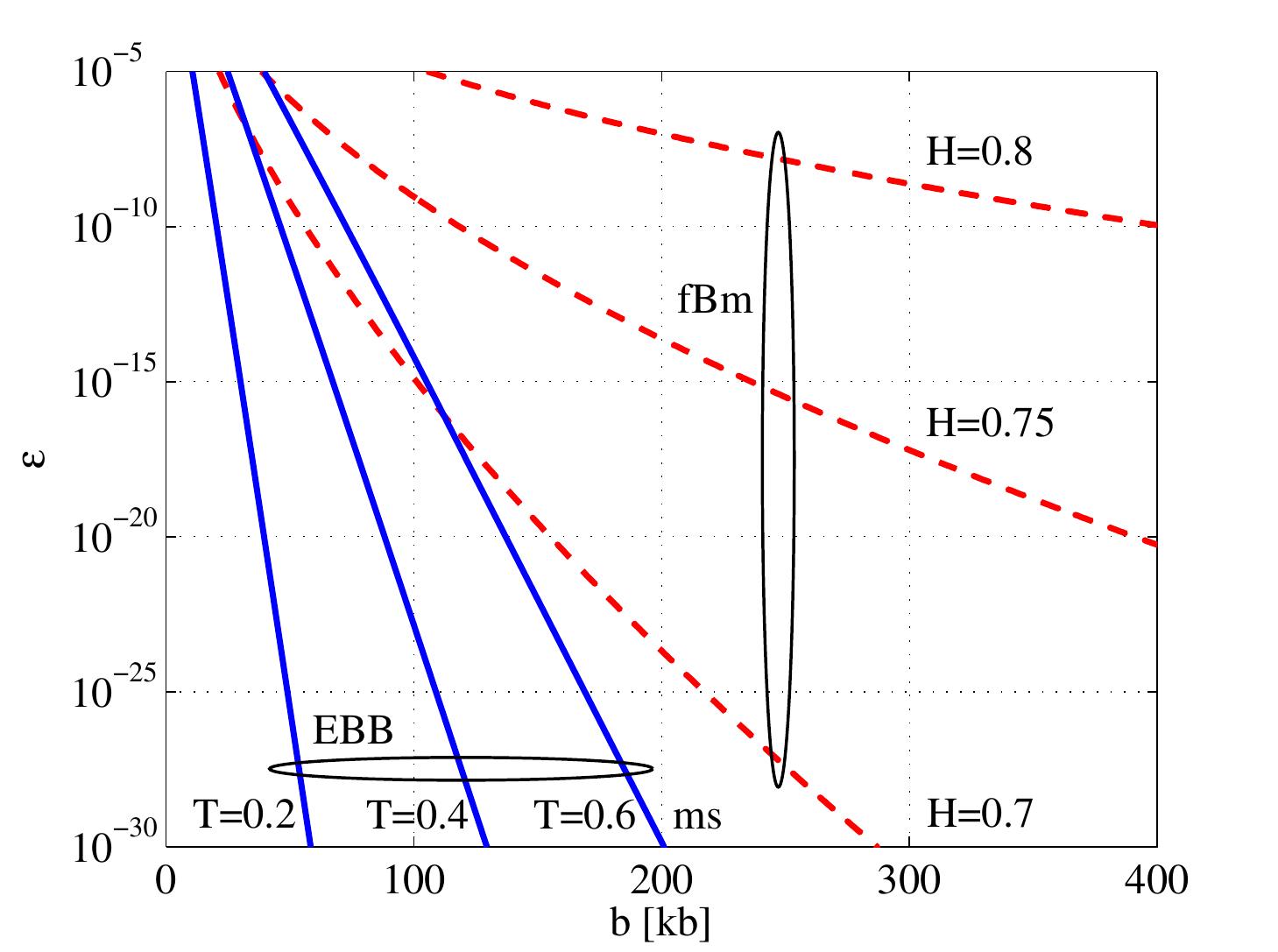}
\end{center}
\caption{Weibullian decay of the overflow probability for fBm traffic where $\log \varepsilon \sim -b^{2-2H}$ compared to the exponential decay for EBB traffic.}
\label{fig:fbmvsebbtail}
\end{figure}
\subsection{Affine Envelopes}
\label{sec:affineenvelopes}
We conclude this section with a corollary on affine fBm envelopes that follow from the backlog bounds. The proof of Th. \ref{thm:backlog} already shows a close link between arrival envelopes and backlog bounds. This relation has been elaborated in \cite{Yin02} where a traffic model referred to as generalized Stochastically Bounded Burstiness (gSBB) is defined. The gSBB traffic characterization uses an affine sample path envelope with defined overflow profile. A general definition of sample-path envelope with overflow profile $\varepsilon(b)$ \cite{cruz:qosmanagement,ciucu06} extends (\ref{eq:samplepathenvelope}) to
\begin{equation}
\mathsf{P}\biggl[ \sup_{\tau \in [0,t]} \{ A(\tau,t) - E(t-\tau) \} >  b \biggr] \leq \varepsilon_s(b).
\label{eq:samplepathenvelope_profile}
\end{equation}
The gSBB envelope is the special case where $E(t) = r t$. The finding in \cite{Yin02} is that given a server with capacity $C$ and statistical backlog bound $b$ with overflow probability $\mathsf{P}[B > b] \le \varepsilon(b)$ the arrivals of the system are gSBB and satisfy the definition of sample path envelope (\ref{eq:samplepathenvelope_profile}) with $E(t) = C t$ and identical overflow profile $\varepsilon(b)$. It is argued in \cite{Yin02} that a gSBB envelope for fBm traffic follows immediately from (\ref{eq:norros_bnd}).
\begin{cor}[\textbf{Affine FBM Envelopes}]
\label{cor:affineenvelopes}
Given fBm traffic $E(t) = r t$ is a sample path envelope (\ref{eq:samplepathenvelope_profile}) with overflow profile
\begin{equation*}
\varepsilon_s(b) = \frac{\Gamma(\frac{1}{2\beta})}{2\beta \vartheta^{\frac{1}{2\beta}}} \, b^{-\frac{1-(H+\beta)}{\beta}}
\end{equation*}
where $\beta \in (0,1-H)$ is a free parameter and
\begin{equation*}
\vartheta = \frac{1}{2\sigma^2} \! \left(\frac{r-\lambda}{H+\beta}\right)^{\!2(H+\beta)} \!\! \left(\frac{1}{1-(H+\beta)}\right)^{\!2-2(H+\beta)} .
\end{equation*}
An approximate overflow profile is $\varepsilon_a(b) = e^{-\upsilon b^{2-2H}}$ where
\begin{equation*}
\upsilon = \frac{1}{2\sigma^2}\biggl(\frac{r-\lambda}{H}\biggr)^{2H} \biggl( \frac{1}{1-H} \biggr)^{2-2H}.
\end{equation*}
\end{cor}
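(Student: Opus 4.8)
The plan is to obtain the corollary from Th. \ref{thm:backlog} together with the duality between statistical backlog bounds and affine sample path envelopes that is recalled just before the statement and established in \cite{Yin02}: if a constant rate server with capacity $C$ guarantees $\mathsf{P}[B > b] \le \varepsilon(b)$ for the fed traffic, then those arrivals satisfy the sample path envelope (\ref{eq:samplepathenvelope_profile}) with $E(t) = Ct$ and the \emph{same} overflow profile $\varepsilon(b)$. Since Th. \ref{thm:backlog} already supplies a backlog bound for fBm traffic at such a server, the result follows by reading that bound as a function of $b$, identifying the envelope rate $r$ with the server capacity $C$, and invoking the duality. No new probabilistic estimate is required, which is exactly why the statement is a corollary rather than a theorem.

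Concretely, I would first set $C = r$ in Th. \ref{thm:backlog}, which needs $r > \lambda$, and then make the dependence on $b$ explicit. The overflow probability there is $\varepsilon_s = \Gamma(\tfrac{1}{2\beta}) / \bigl(2\beta (-\log\eta)^{1/(2\beta)}\bigr)$, so it remains only to isolate $b$ inside $-\log\eta$. Inserting $\eta$ from Th. \ref{thm:backlog} and pulling the buffer size out of the power, one gets $-\log\eta = \vartheta\, b^{2-2(H+\beta)}$, where $\vartheta$ collects the $b$-independent factors and coincides precisely with the $\vartheta$ stated in the corollary. Raising to the power $1/(2\beta)$ gives $(-\log\eta)^{1/(2\beta)} = \vartheta^{1/(2\beta)}\, b^{(1-(H+\beta))/\beta}$, and substituting back yields $\varepsilon_s(b) = \Gamma(\tfrac{1}{2\beta}) / \bigl(2\beta \vartheta^{1/(2\beta)}\bigr)\, b^{-(1-(H+\beta))/\beta}$, as claimed. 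Because $\beta \in (0,1-H)$ forces $1-(H+\beta) > 0$, the exponent of $b$ is negative, so the profile decays in $b$ and is meaningful as an overflow profile.

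The approximate profile follows the identical route but starts from the asymptotic backlog bound (\ref{eq:norros_bnd}) in place of the rigorous one of Th. \ref{thm:backlog}. Setting $C = r$ in (\ref{eq:norros_bnd}) and again extracting $b$ from the exponent gives directly $\varepsilon_a(b) = e^{-\upsilon b^{2-2H}}$ with $\upsilon$ as stated, and here no optimization over $\beta$ enters since (\ref{eq:norros_bnd}) is already the $\beta = 0$ largest term approximation. I do not anticipate a genuine obstacle: the duality of \cite{Yin02} converts the backlog bound into an envelope verbatim, and the only care needed is the bookkeeping of correctly separating the $b$-dependent factor from the constants inside $-\log\eta$ and confirming that the resulting exponent of $b$ stays negative throughout the admissible range $\beta \in (0,1-H)$.
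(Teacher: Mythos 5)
Your proposal is correct and takes essentially the same route as the paper, which justifies the corollary in exactly this way: the gSBB duality of \cite{Yin02} applied to the backlog bound of Th.~\ref{thm:backlog} with $C=r$, extracting $-\log\eta = \vartheta\, b^{2-2(H+\beta)}$ to obtain the stated profile, and the same substitution in (\ref{eq:norros_bnd}) for the approximate profile. Your bookkeeping, including the exponent $-\frac{1-(H+\beta)}{\beta}$, the requirement $r>\lambda$, and the remark that the approximate profile inherits the largest-term approximation rather than a rigorous sample path guarantee, matches the paper's reasoning.
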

The first statement of the corollary uses our backlog bound from Th. \ref{thm:backlog} that is based on the sample path envelope from Th. \ref{theorem1}. The second statement uses (\ref{eq:norros_bnd}) as proposed in \cite{Yin02}. Note, however, that (\ref{eq:norros_bnd}) is based on the approximation by the largest term (\ref{eqn:change_P_sup}), i.e. it does not provide a sample path envelope.

The derived sample path envelopes are an essential building block for application of the stochastic network calculus. In Sect. \ref{sec:LeftoverServiceCurveUnderFBMCrossTraffic} we will use the envelopes to derive a leftover service curve to analyze systems where fBm cross traffic is multiplexed, scheduled, and de-multiplexed afterwards. In Sect. \ref{sec:EndToEndConcatenation} we will compose these service curves by convolution to explore tandem systems, each under fBm cross traffic. We will use the affine envelopes established by Cor. \ref{cor:affineenvelopes}. The envelope from Th. \ref{theorem1} can be used in the same way and may yield tighter bounds at the cost of additional complexity.
\section{Leftover Service under FBM Cross Traffic}
\label{sec:LeftoverServiceCurveUnderFBMCrossTraffic}
The network calculus uses the concept of service curves $S(t)$ to model the service provided by a system. The service curve relates a system's departures $D(t)$ to its arrivals $A(t)$ \cite{sariowan:servicecurves}. A fundamental stochastic service curve is defined in \cite{ciucu06} and likewise in \cite{cruz:qosmanagement}
\begin{equation}
\mathsf{P}\biggl[D(t) < \inf_{\tau \in [0,t]} \{ A(\tau) + [S(t-\tau)-b ]_+ \} \biggr] \leq \varepsilon(b) \;
\label{eq:stochastic_srv_crv}
\end{equation}
where $[x]_+ = \max \{0,x\}$. The stochastic service curve is subject to a deficit profile $\varepsilon(b)$ that is decreasing in $b$. The inner operation $\inf_{\tau \in [0,t]} \{f(\tau) + g(t-\tau) \} := f \otimes g(t)$ is referred to as the convolution under the min-plus algebra.

So-called leftover service curves can effectively characterize the service that remains for a through flow at a system after scheduling cross traffic \cite{boorstyn:effectiveenvelopes,ciucu06}. A basic leftover service curve that does not make any assumptions about the order of scheduling can be deduced by subtracting the sample path envelope of the cross traffic from the service provided by the system. Given a server with capacity $C$ and cross traffic with sample path envelope $E(t)$ and overflow profile $\varepsilon_s(b)$ as in (\ref{eq:samplepathenvelope_profile}). A leftover service curve that satisfies (\ref{eq:stochastic_srv_crv}) with deficit profile $\varepsilon_s(b)$ is $S(t) = C t - E(t)$. The following corollary uses the affine fBm sample path envelope from Cor. \ref{cor:affineenvelopes} to characterize the leftover service under fBm cross traffic.
\begin{table}
\begin{center}
\caption{Traffic parameters used for Fig. \ref{fig:delaybnd_diff_throughtraffic}, Fig. \ref{fig:delaybnd_diff_end2end}, and Fig. \ref{fig:delayinc_diff_end2end}.}
\label{tab:parameters}
\begin{tabular}{|c|c|cccc|} \hline
traffic & type & \# sources & mean rate & variability & correlation \\ \hline \hline
\multirow{3}{*}{$\!\!\!$through$\!\!\!$} & CBR & $m=100$ & $\lambda = 2.5$ Mb/s & -- & -- \\
& EBB & $m = 100$ & $\lambda = 2.5$ Mb/s & $P = 5 \lambda$ & $T^{th}$  \\
& fBm & $m=1$ & $\lambda = 250$ Mb/s & $\sigma = \lambda/2$ & $H^{th}$ \\ \hline
\multirow{2}{*}{cross} & EBB & $m=100$ & $\lambda = 2.5$ Mb/s & $P = 5 \lambda$ & $T^{cr}$ \\
& fBm & $m=1$ & $\lambda = 250$ Mb/s & $\sigma = \lambda/2$ & $H^{cr}$ \\ \hline
\end{tabular}
\vspace{-2pt}
\end{center}
\end{table}
\begin{cor}[\textbf{FBM Leftover Service Curve}]
\label{cor:fbmleftoverservice}
Consider a server with capacity $C$ under fBm cross traffic. A service curve (\ref{eq:stochastic_srv_crv}) for through traffic is $S(t) = (C-r) t$ with deficit profile $\varepsilon(b)$. The deficit profile equals the overflow profile $\varepsilon_{s}(b)$ of the fBm envelope in Cor. \ref{cor:affineenvelopes}. It is approximated by $\varepsilon_{a}(b)$, Cor. \ref{cor:affineenvelopes}.
\end{cor}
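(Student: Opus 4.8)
My plan is to read Cor.~\ref{cor:fbmleftoverservice} as a direct specialization of the generic leftover service curve stated immediately before it: subtracting a cross-traffic sample path envelope $E(t)$ with overflow profile $\varepsilon_s(b)$ from the server's capacity yields a stochastic service curve $S(t) = Ct - E(t)$ with deficit profile $\varepsilon_s(b)$. Since Cor.~\ref{cor:affineenvelopes} supplies an affine fBm sample path envelope $E(t) = rt$ together with its overflow profile $\varepsilon_s(b)$, the substitution $S(t) = Ct - rt = (C-r)t$ is immediate and the deficit profile is inherited verbatim; the approximate profile $\varepsilon_a(b)$ follows the same way from the approximate overflow profile of Cor.~\ref{cor:affineenvelopes}. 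The only parameter bookkeeping is that the envelope rate must satisfy $r > \lambda$ for the fBm profile of Cor.~\ref{cor:affineenvelopes} to be well defined, while $C > r$ is what turns $S(t) = (C-r)t$ into a nontrivial positive-rate guarantee; for $r \ge C$ the clipping $[\,\cdot\,]_+$ in (\ref{eq:stochastic_srv_crv}) leaves the statement formally correct but vacuous.

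To justify the generic result I am invoking, I would run the standard work-conserving busy-period argument. Let $A, D$ denote the through flow and $A_c, D_c$ the fBm cross flow at the shared rate-$C$ server, and let $\tau^*$ be the start of the last backlogged period of the aggregate before time $t$. Work conservation gives $D_{\text{tot}}(t) = A(\tau^*) + A_c(\tau^*) + C(t-\tau^*)$, and bounding cross departures by cross arrivals, $D_c(t) \le A_c(t)$, yields $D(t) = D_{\text{tot}}(t) - D_c(t) \ge A(\tau^*) + C(t-\tau^*) - A_c(\tau^*,t)$. On the complement of the envelope-violation event in (\ref{eq:samplepathenvelope_profile}) for the cross traffic --- an event of probability at least $1 - \varepsilon_s(b)$ --- we have $A_c(\tau^*,t) \le E(t-\tau^*) + b = r(t-\tau^*) + b$, hence $D(t) \ge A(\tau^*) + S(t-\tau^*) - b$ with $S(\cdot) = (C-r)(\cdot)$.

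The one place that needs care --- and the step I expect to be the main obstacle --- is reconciling this bound with the clipped form $A(\tau^*) + [S(t-\tau^*)-b]_+$ in (\ref{eq:stochastic_srv_crv}). When $S(t-\tau^*) \ge b$ the clipping is inactive and the bound above is exactly what is required. When $S(t-\tau^*) < b$ I would instead use that zero aggregate backlog at $\tau^*$ forces zero through backlog, so $D(t) \ge D(\tau^*) = A(\tau^*) = A(\tau^*) + [S(t-\tau^*)-b]_+$. In either case $D(t) \ge A(\tau^*) + [S(t-\tau^*)-b]_+ \ge \inf_{\tau \in [0,t]}\{A(\tau) + [S(t-\tau)-b]_+\}$, so the single witness $\tau^*$ dominates the infimum. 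Collecting the probability of the envelope-violation event gives the deficit profile $\varepsilon_s(b)$, and inserting the fBm envelope of Cor.~\ref{cor:affineenvelopes} completes the specialization.
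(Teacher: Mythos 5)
Your proposal is correct and follows essentially the same route as the paper: the paper establishes Cor.~\ref{cor:fbmleftoverservice} precisely by specializing the generic leftover service curve $S(t) = Ct - E(t)$ with deficit profile $\varepsilon_s(b)$, stated in the paragraph immediately before the corollary (with the underlying proof delegated to \cite{boorstyn:effectiveenvelopes,ciucu06}), to the affine fBm envelope $E(t) = rt$ of Cor.~\ref{cor:affineenvelopes}, which is exactly your first step. Your busy-period derivation --- including the correct observation that the sample path form of (\ref{eq:samplepathenvelope_profile}) is what covers the random $\tau^*$, and the handling of the clipping via $D(\tau^*) = A(\tau^*)$ at the start of the aggregate busy period --- is the standard proof of that cited generic result, so you have merely made explicit what the paper imports by reference.
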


The definitions of service curve and sample path arrival envelope facilitate the derivation of performance bounds. Given arrivals with sample path envelope $E(t)$ (\ref{eq:samplepathenvelope_profile}) at a system with service curve $S(t)$ (\ref{eq:stochastic_srv_crv}), the steady state virtual backlog $B$ is stochastically bounded by \cite{cruz:qosmanagement,ciucu06}
\begin{equation*}
\mathsf{P} \biggl[B > \sup_{\tau \ge 0} \left\{E(\tau) - S(\tau)\right\} + b \biggr] \leq \varepsilon(b)
\end{equation*}
where $\varepsilon(b) = \varepsilon^{th} \otimes \varepsilon^{cr}(b)$. We use superscripted $\varepsilon^{th}$ and $\varepsilon^{cr}$ to distinguish overflow profiles of through and cross traffic. The latter matches the deficit profile of the service curve. The fcfs waiting time $W$ is bounded by $\mathsf{P} [W > w] \leq \varepsilon(b)$ where
\begin{equation}
w = \inf \{\tau \ge 0: S(t+\tau)\geq E(t) + b \;\; \forall t \geq 0 \} .
\label{eq:horizontaldeviation}
\end{equation}
Backlog and delay can be visualized as the vertical, respectively, horizontal deviation of the arrival envelope and the service curve subject to the overflow and deficit profiles.

We derive performance bounds for a through flow under fBm cross traffic using the leftover service curve from Cor. \ref{cor:fbmleftoverservice}. We consider three fundamentally different types of through traffic: (\textit{a}) constant bit rate (CBR), (\textit{b}) traffic with exponentially bounded burstiness (EBB) \cite{sidi93}, and (\textit{c}) fBm with LRD, i.e. slower than exponential burstiness decay.

All three types of through traffic are configured to have the same mean rate $\lambda$. We generally use affine sample path envelopes (\ref{eq:samplepathenvelope_profile}). Trivially, the envelope of the CBR traffic is $E(t) = r t$ with $r=\lambda$ and overflow profile $\varepsilon_s(b) = 0$ for all $b \ge 0$. For generation of EBB traffic we use the discrete time Markov model introduced in Sect. \ref{sec:nearoptimalbeta}. A sample path envelope (\ref{eq:samplepathenvelope_profile}) for the aggregate traffic generated by $m$ on-off sources follows from the backlog bound (\ref{eq:ebbbacklog}) as $E(t) = r t$ with overflow profile $\varepsilon_s(b) = e^{-\theta b}/(\theta (r-m\rho(\theta)))$ for any $\theta > 0$ that satisfies $m\rho(\theta) < r$. For the fBm through traffic we use the envelope from Cor. \ref{cor:affineenvelopes}.

The delay bound follows from (\ref{eq:horizontaldeviation}) for $r^{th} + r^{cr} \le C$ as
\begin{equation*}
\mathsf{P} \biggl[W > \frac{b^{th} + b^{cr}}{C - r^{cr}} \biggr] \le \varepsilon^{th}(b^{th}) + \varepsilon^{cr}(b^{cr}).
\end{equation*}
Here, $b^{th}, r^{th}$ are the parameters of the through traffic envelope and $b^{cr}, r^{cr}$ are the parameters of the cross traffic envelope that determine the leftover service curve. Note that different sets of parameters can yield different violation probabilities for the same delay bound. To obtain the best possible result we optimize the parameters of the envelopes including the parameter $\beta$ for each traffic type numerically.

\begin{figure}
\centering
\includegraphics[width=1.0\columnwidth]{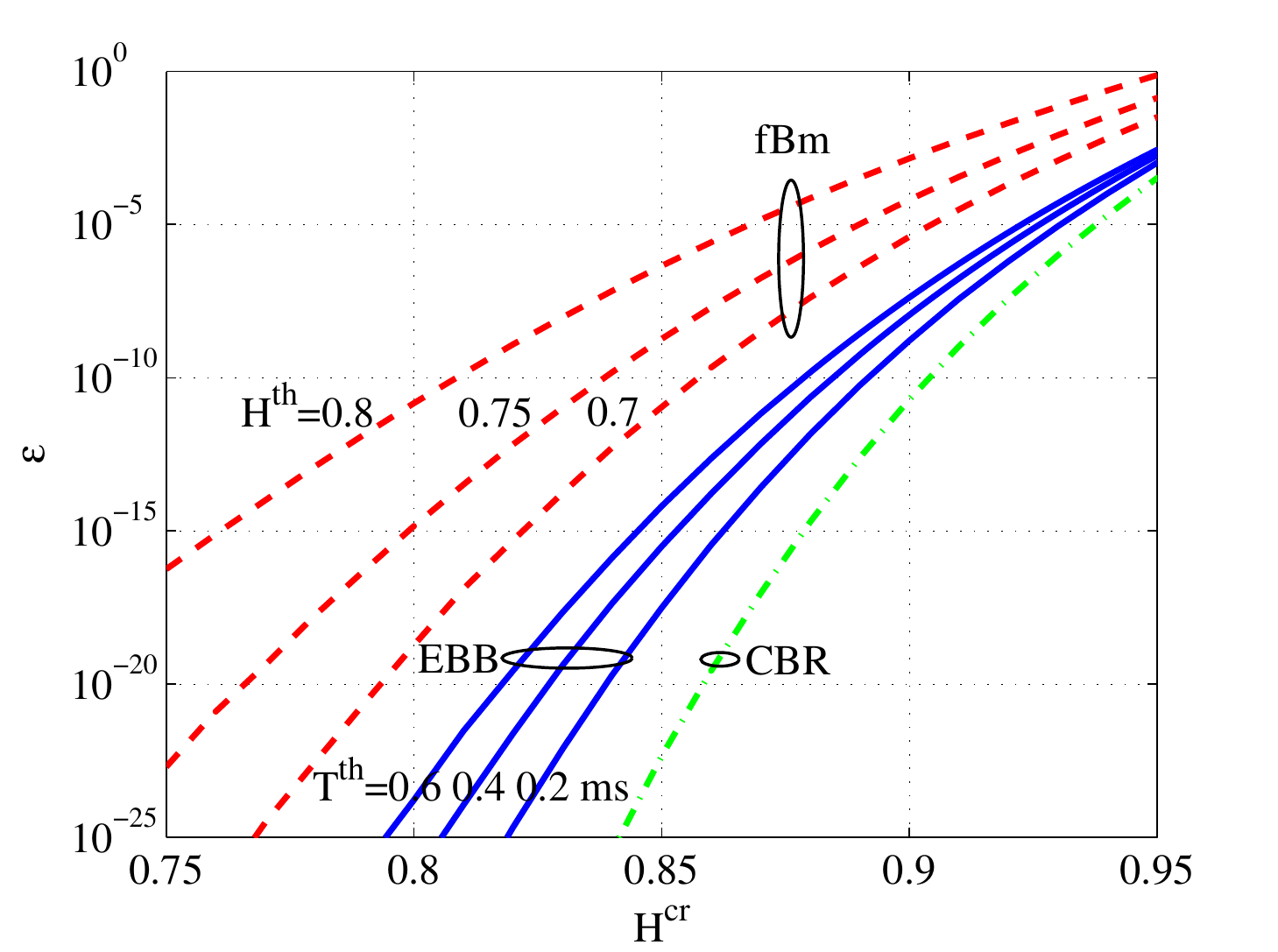}
\caption{Violation probability of a delay bound of 1 ms for different types of through traffic at a constant rate server with fBm cross traffic. The Hurst parameter of the cross traffic $H^{cr}$ has significant impact on the performance of the through traffic. While the burstiness of through traffic is crucial if $H^{cr}$ is small, the cross traffic dominates if $H^{cr}$ is large.}
\label{fig:delaybnd_diff_throughtraffic}
\end{figure}
In Fig. \ref{fig:delaybnd_diff_throughtraffic} we display the violation probability of a delay bound of 1 ms for CBR, EBB, and fBm through traffic at a server with capacity $C = 1$ Gb/s under fBm cross traffic. Through and cross traffic are configured to have the same mean rate. The traffic parameters are summarized in Tab. \ref{tab:parameters}. Fig. \ref{fig:delaybnd_diff_throughtraffic} shows the huge impact of the Hurst parameter $H^{cr}$ of the cross traffic on the performance of through flows. While type and burstiness of the through traffic largely determine the violation probability of the delay bound, the influence becomes much less pronounced for cross traffic with large $H^{cr}$, i.e. the LRD of the cross traffic becomes the dominating effect.
\section{End-to-End Performance Bounds}
\label{sec:EndToEndConcatenation}
The particular strength of the network calculus is its ability to characterize the end-to-end service of tandem systems. The service curves of individual systems can be composed by min-plus convolution into a network service curve. This network service curve models a whole network as if it were a single system. Hence, it facilitates the derivation of end-to-end performance bounds using single system results, e.g. (\ref{eq:horizontaldeviation}).
\subsection{Sample Path FBM Leftover Service Curve}
The easy composition of tandem systems in the network calculus is due to the associativity of min-plus convolution. For the special case of deterministic systems (\ref{eq:stochastic_srv_crv}) has overflow profile $\varepsilon(b) = 0$ for all $b \ge 0$ and reduces to $D(t) \ge A \otimes S(t)$ almost surely. Given a network of two systems with deterministic service curves $S^1(t)$ and $S^2(t)$ in series the departures of the first system $D^1(t) = A^1 \otimes S^1(t)$ are the arrivals to the second system $A^2(t) = D^1(t)$ and by recursive insertion it follows that $D^2(t) \ge A^2 \otimes S^2 (t) = (A^1 \otimes S^1) \otimes S^2(t)$. Using associativity of min-plus convolution $D^2 \ge A^1 \otimes S^{net}(t)$ where $S^{net} = S^1 \otimes S^2(t)$ is the network service curve.

The composition of stochastic service curves is, however, much more involved. The difficulty is due to the fact that a recursive insertion of the stochastic service curve (\ref{eq:stochastic_srv_crv}) is not possible since (\ref{eq:stochastic_srv_crv}) uses sample paths of the arrivals $A(t)$ but makes only a point-wise statement for the departures $D(t)$. To derive stochastic network service curves an extended definition of stochastic service curve that makes sample path guarantees for the departures is required. The problem does not occur in the deterministic case, see \cite{ciucu06,ciucu:networkcalculusscaling} for details.

A fundamental stochastic network service curve is derived in \cite{ciucu06,ciucu:networkcalculusscaling}. We make a marginal adaptation of the service curve to discrete time that is used in this work. For $n$ systems in series each with service curve $S^i(t)$ and deficit profile $\varepsilon^i(b)$ according to (\ref{eq:stochastic_srv_crv}) a network service curve is
\begin{equation}
S^{net}(t) = S^{1} \otimes S^{2}_{-\delta} \otimes \cdots \otimes S^{n}_{-(n-1)\delta}(t)
\label{eq:s_e2e}
\end{equation}
where $S_{-\delta}(t) = S(t)-\delta t$ and $\delta > 0$ is a free parameter. The network service curve satisfies (\ref{eq:stochastic_srv_crv}) with deficit profile
\begin{equation}
\varepsilon^{net}(b) = \varepsilon^1_{\delta} \otimes \varepsilon^2_{\delta} \otimes \cdots \varepsilon^{n-1}_{\delta} \otimes \varepsilon^n (b).
\label{eq:epsilon_e2e}
\end{equation}
The deficit profiles $\varepsilon^i_{\delta}(b)$ stem from an extended definition of stochastic service curve that makes guarantees for entire sample paths of the departures $D(t)$ as opposed to (\ref{eq:stochastic_srv_crv}) that only makes a point-wise statement. To derive such sample path guarantees \cite{ciucu06,ciucu:networkcalculusscaling} contributes an essential sample path service curve that is relaxed by parameter $\delta > 0$. The definition of sample path service curve states that
\begin{equation*}
\mathsf{P}\Biggl[\sup_{t \in [0,u]} \! \biggl\{ \inf_{\tau \in [0,t]} \bigl\{ A(\tau) + [S(t-\tau)-\delta (u-t)-b]_+ \bigr\} - D(t) \! \biggr\} \! > \! 0 \Biggr]
\end{equation*}
is smaller equal $\varepsilon_{\delta}(b)$. The relaxation by $\delta$ permits deriving the sample path deficit profile from the point-wise deficit profile of the departures using Boole's inequality as \cite{ciucu06,ciucu:networkcalculusscaling}
\begin{equation}
\varepsilon_{\delta}(b) = \frac{1}{\delta} \int_{b}^{\infty} \varepsilon(x) dx .
\label{eq:departuresamplepathprofile}
\end{equation}

We use the concept of network service curve to analyze tandem systems where fBm cross traffic is multiplexed and de-multiplexed. To this end, we derive the sample path deficit profile (\ref{eq:departuresamplepathprofile}) for a leftover service curve under fBm cross traffic.
\begin{cor}[\textbf{Sample Path FBM Leftover Service Curve}]
\label{cor:fbme2e}
Consider a server with capacity $C$ under fBm cross traffic as given in Cor. \ref{cor:affineenvelopes}. The leftover service curve $S(t) = (C - r) t$ relaxed by rate $\delta > 0$ has the sample path deficit profile (\ref{eq:departuresamplepathprofile})
\begin{equation*}
\varepsilon_{\delta}(b) = \frac{\Gamma(\frac{1}{2\beta})}{2\delta \vartheta^{\frac{1}{2\beta}} (1\!-\!(H\!+\!2\beta))} \, b^{-\frac{1-(H+2\beta)}{\beta}}
\end{equation*}
where $\beta \in \bigl(0,\frac{1-H}{2}\bigr)$ is a free parameter and $\vartheta$ from Cor. \ref{cor:affineenvelopes}.
\end{cor}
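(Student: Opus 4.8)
The plan is to apply the sample path deficit profile formula (\ref{eq:departuresamplepathprofile}) directly to the deficit profile of the fBm leftover service curve. By Cor. \ref{cor:fbmleftoverservice}, the leftover service curve $S(t) = (C-r)t$ carries a deficit profile equal to the overflow profile $\varepsilon_s(b)$ of the affine fBm envelope from Cor. \ref{cor:affineenvelopes}, namely the pure power law
\begin{equation*}
\varepsilon_s(x) = \frac{\Gamma(\frac{1}{2\beta})}{2\beta \vartheta^{\frac{1}{2\beta}}} \, x^{-\frac{1-(H+\beta)}{\beta}}.
\end{equation*}
Thus the entire computation reduces to integrating a power function from $b$ to infinity.

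First I would substitute $\varepsilon(x) = \varepsilon_s(x)$ into (\ref{eq:departuresamplepathprofile}) and pull the constant prefactor outside the integral, so that the whole statement hinges on evaluating $\int_{b}^{\infty} x^{-p} dx$ with exponent $p = \frac{1-(H+\beta)}{\beta}$. This elementary integral equals $\frac{b^{1-p}}{p-1}$ whenever $p > 1$ and diverges otherwise.

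Next I would check the convergence condition $p > 1$, which is equivalent to $1-(H+\beta) > \beta$, i.e. $\beta < \frac{1-H}{2}$. This is exactly the tightened parameter range stated in the corollary: integrating the power-law profile (equivalently, the extra factor $x^{-1}$ introduced by dividing by the slack rate $\delta$) costs one additional unit in the exponent and therefore halves the admissible slack $\beta$ compared to Cor. \ref{cor:affineenvelopes}. Recognizing that this integrability requirement pins down the new interval $\beta \in (0, \frac{1-H}{2})$ is the one conceptual point; the remainder is bookkeeping.

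Finally I would carry out the exponent arithmetic. Writing $p - 1 = \frac{1-(H+2\beta)}{\beta}$ and $1 - p = -\frac{1-(H+2\beta)}{\beta}$, the integral evaluates to $\frac{\beta}{1-(H+2\beta)} \, b^{-\frac{1-(H+2\beta)}{\beta}}$. Multiplying by the prefactor $\frac{1}{\delta} \cdot \frac{\Gamma(\frac{1}{2\beta})}{2\beta \vartheta^{\frac{1}{2\beta}}}$ cancels the stray factor $\beta$ and yields precisely the claimed profile. I expect no genuine obstacle here, since the result follows by direct substitution and a single elementary integration; the only care required is tracking the convergence constraint and simplifying the exponents correctly.
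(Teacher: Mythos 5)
Your proposal is correct and follows essentially the same route as the paper, which likewise obtains Cor.~\ref{cor:fbme2e} by inserting the power-law deficit profile from Cor.~\ref{cor:fbmleftoverservice} (i.e.\ the overflow profile of Cor.~\ref{cor:affineenvelopes}) into (\ref{eq:departuresamplepathprofile}) and evaluating $\int_b^\infty x^{-\frac{1-(H+\beta)}{\beta}}\,dx$, noting that the integral is finite precisely when $\beta < \frac{1-H}{2}$. Your exponent bookkeeping, the convergence condition, and the cancellation of the stray factor $\beta$ against the prefactor all reproduce the paper's computation exactly.
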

Cor. \ref{cor:fbme2e} follows by insertion of the deficit profile from Cor. \ref{cor:fbmleftoverservice}, respectively, Cor. \ref{cor:affineenvelopes} into (\ref{eq:departuresamplepathprofile}) as
\begin{equation*}
\varepsilon_{\delta}(b) = \frac{\Gamma(\frac{1}{2\beta})}{2\beta \delta \vartheta^{\frac{1}{2\beta}}} \int_{b}^{\infty} x^{-\frac{1-(H+\beta)}{\beta}} dx
\end{equation*}
that has a finite solution if $\beta < \frac{1-H}{2}$. Alternatively, the approximate overflow profile $\varepsilon_a(b)$ from Cor. \ref{cor:affineenvelopes} can be used to derive a similar result where $b^{2-2H}$ becomes, however, the second argument of an incomplete Gamma function.
\subsection{Scaling of End-to-end Performance Bounds}
\begin{figure}
\centering
\includegraphics[width=1.0\columnwidth]{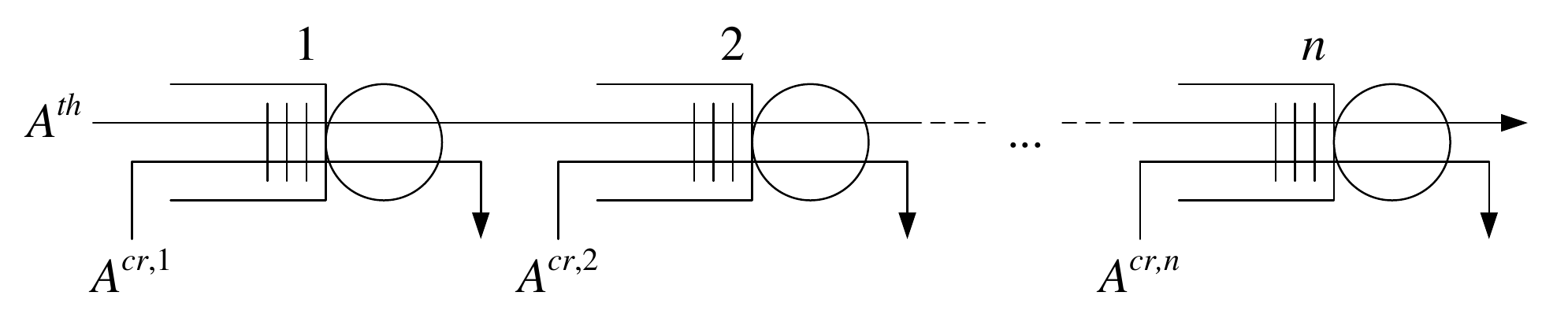}
\caption{A through flow traverses a tandem of $n$ queuing systems. Cross traffic is multiplexed and de-multiplexed at each system}
\label{fig:scenario}
\end{figure}
Cor. \ref{cor:fbme2e} enables the derivation of end-to-end service curves for networks under fBm cross traffic. In the remainder of this section we consider the line topology shown in Fig. \ref{fig:scenario} where fBm cross traffic is multiplexed and de-multiplexed at each hop. For ease of notation we assume $n$ homogeneous systems in series, each with capacity $C$ and fBm cross traffic with identical parameters $\lambda$, $\sigma$, and $H^{cr}$. We show that end-to-end performance bounds for $n$ tandem systems under fBm cross traffic grow in $\mathcal{O}\bigl(n (\log n)^{\frac{1}{2-2H}}\bigr)$.

The leftover service curve at each system is $S^i(t) = (C-r^{cr})t$ where $r^{cr}$ is the envelope rate of the fBm cross traffic. From (\ref{eq:s_e2e}) we obtain $S^{net} (t) = (C - r^{cr} - \Delta) t$ where $\Delta = (n-1) \delta$ is in $(0,C-r^{cr})$ and used as a constant. The deficit profile follows from (\ref{eq:epsilon_e2e}) by insertion of Cor. \ref{cor:fbmleftoverservice} and Cor. \ref{cor:fbme2e} as
\begin{equation*}
\varepsilon^{net}(b) \! = \! \frac{\Gamma(\frac{1}{2\beta})}{2 \vartheta^{\frac{1}{2\beta}}} \inf_{x} \Biggl\{ \!\! \frac{(n\!-\!1)^2 (\frac{b-x}{n-1})^{-\frac{1-(H+2\beta)}{\beta}}}{\Delta(1-(H+2\beta))}  + \frac{x^{-\frac{1-(H+\beta)}{\beta}}}{\beta} \! \Biggr\}
\end{equation*}
where $x \in (0,b)$ and $\beta \in \bigl(0,\frac{1-H}{2}\bigr)$.

Fig. \ref{fig:delaybnd_diff_end2end} shows end-to-end delay bounds for a CBR through flow that traverses $n$ tandem systems each with capacity $C$ and fBm cross traffic. The delay bound is computed from (\ref{eq:horizontaldeviation}) using the network service curve and CBR through traffic as $\mathsf{P}[W > b / (C - r^{cr} - \Delta)] \le \varepsilon^{net}(b)$ under the constraint that $r^{th} + r^{cr} + \Delta \le C$. The parameters of through and cross traffic are given in Tab. \ref{tab:parameters}. As before, we optimize the free parameters of the envelopes numerically.

Fig. \ref{fig:delaybnd_diff_end2end} once more confirms the huge impact of cross traffic with LRD on network performance. After numerical optimization of $\beta$ Fig. \ref{fig:delaybnd_diff_end2end} shows a slightly faster than linear growth of end-to-end delay bounds with the number of tandem systems $n$ subject to a violation probability $\varepsilon = 10^{-12}$.
\begin{figure}
\centering
\includegraphics[width=1.0\columnwidth]{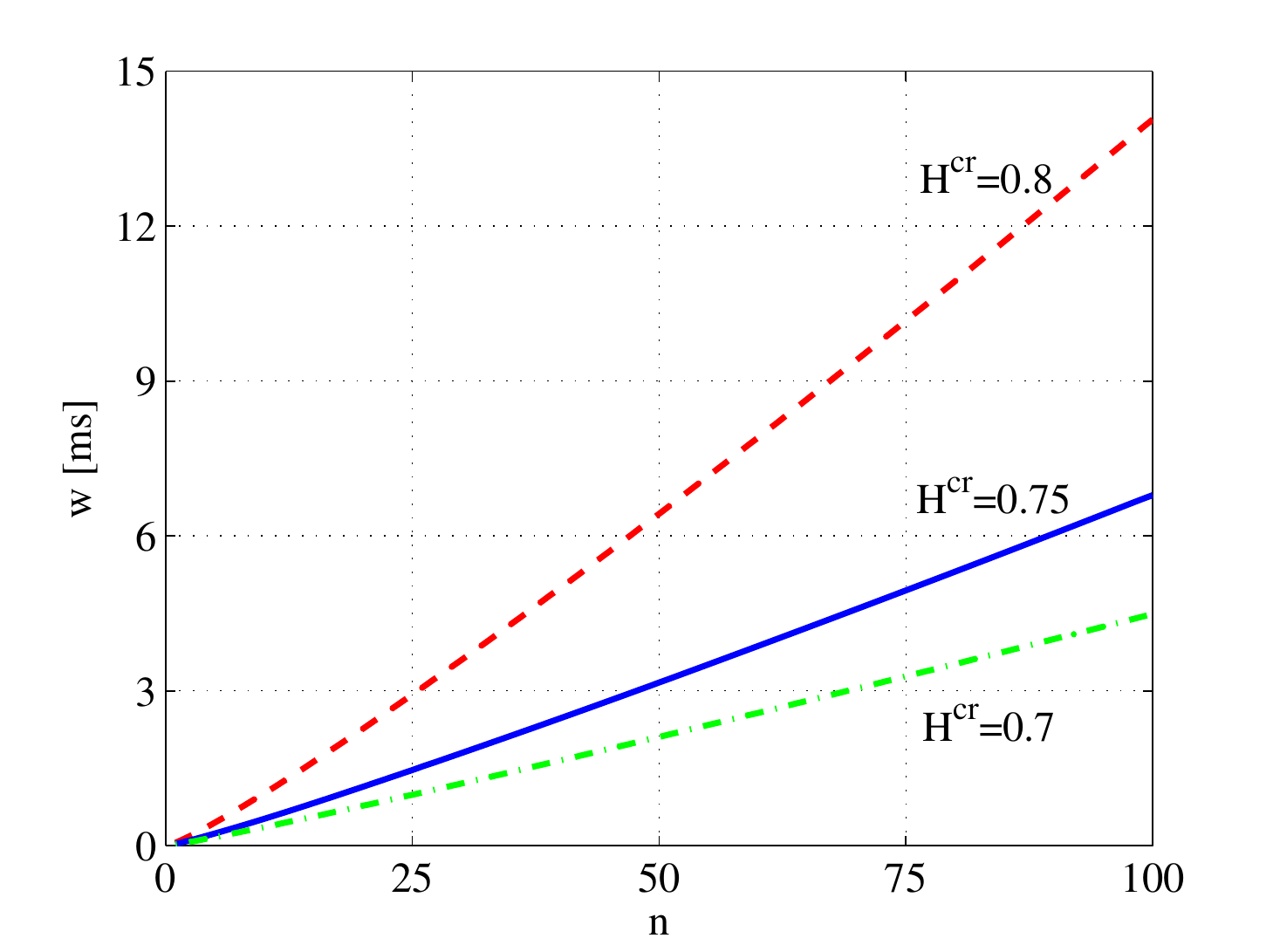}
\caption{End-to-end delay bounds with violation probability $\varepsilon = 10^{-12}$ for $n$ constant rate servers with fBm cross traffic in series. The delay bounds shown for different $H^{cr}$ of cross traffic grow slightly faster than linearly.}
\label{fig:delaybnd_diff_end2end}
\end{figure}

We derive an analytical result on the scaling of end-to-end delay bounds with the number of systems in series using Stirling's formula as in Sect. \ref{sec:nearoptimalbeta}. The growth of end-to-end delay bounds with $n$ is determined by the tail decay of the fBm overflow, respectively, deficit profiles. We neglect the irregularity in (\ref{eq:epsilon_e2e}) that is due to the last hop and estimate $\varepsilon^n$ by $\varepsilon^n_{\delta}$ to obtain the simplified deficit profile
\begin{equation*}
\varepsilon^{net}(b) = \frac{\Gamma(\frac{1}{2\beta}) n (n-1) (\frac{b}{n})^{-\frac{1-(H+2\beta)}{\beta}}}{2 \vartheta^{\frac{1}{2\beta}} \Delta(1-(H+2\beta))}.
\end{equation*}
Assuming $\beta \ll 1$ we use Stirling's formula to obtain
\begin{equation}
\tilde{\varepsilon}^{net}(b) = \frac{\sqrt{\pi \beta} n (n-1) (\frac{b}{n})^{-\frac{1-(H+2\beta)}{\beta}}}{ (2 e \beta \vartheta)^{\frac{1}{2\beta}} \Delta(1-(H+2\beta))}.
\label{eq:stirling_e2e}
\end{equation}
We use the same steps as in Sect. \ref{sec:nearoptimalbeta} to minimize $\varepsilon^{net}(b)$ over $\beta$. For $\beta \ll (1-H)/2$ we find the near optimal solution
\begin{equation*}
\beta^* = \frac{n^{2-2H}}{2(-\log\varepsilon_a)}
\end{equation*}
where $\varepsilon_a$ is defined in Cor. \ref{cor:affineenvelopes}. The term $\varepsilon_a$ is used here only to shorten notation and to express $\varepsilon^{net}$ as a multiple of $\varepsilon_a$ in the sequel. We emphasize that we do not use the approximation by the largest term (\ref{eqn:change_P_sup}).

We define $\psi = (1-H)/(1-(H+2\beta))$ and use $\chi$ from (\ref{eq:chi}) to find by insertion of $\beta^*$ into (\ref{eq:stirling_e2e}) that
\begin{equation*}
\tilde{\varepsilon}^{net}(b) = \frac{n-1}{n^H} \! \left(\frac{H}{1-H}\right)^H \! \frac{\sqrt{\pi} \psi \sigma b^{1+H}}{\Delta(r^{cr}-\lambda)^{1+H}} \varepsilon_a^{(1+\log\chi)n^{2H-2}} .
\end{equation*}
Note that both $\chi$ and $\psi$ approach 1 as $\beta \rightarrow 0$. Assuming $\beta^* \ll (1-H)/2$ we can generally find constants that bound $\chi$ and $\psi$ from below, respectively, from above. Inserting $\varepsilon_a$ as defined in Cor. \ref{cor:affineenvelopes} and using positive constants $c_i$ we obtain
\begin{equation*}
\tilde{\varepsilon}^{net}(b) \le n^{1-H} \, c_1 b^{1+H} e^{-c_2 b^{2-2H} n^{2H-2}} .
\end{equation*}
We let $b = n (c_0 \log n)^{\frac{1}{2-2H}}$ for $n \ge 2$ and find
\begin{equation*}
\tilde{\varepsilon}^{net} \le (c_0 \log n)^{\frac{1+H}{2-2H}} c_1 n^{2-c_0 c_2}.
\end{equation*}
Generally, there exists $c_0 > 2/c_2$ such that $\tilde{\varepsilon}^{net}$ is upper bounded by a constant for all $n$. Fixing $\tilde{\varepsilon}^{net}$ it follows that
\begin{equation*}
b \in \mathcal{O}\left(n (\log n)^{\frac{1}{2-2H}}\right).
\end{equation*}
Finally, we verify that $\beta^*$ decreases with $n$ where the decay is proportional to $1/\log(n)$. This confirms the assumption that given $\beta^*$ is small it also remains small with increasing $n$. Recall that the above approximation using Stirling's formula is exact for $\beta \rightarrow 0$.

For comparison we use the same network, however, with EBB instead of fBm cross traffic. As before, the EBB traffic is an aggregate of $m$ on-off sources. From the EBB sample path envelope in Sect. \ref{sec:LeftoverServiceCurveUnderFBMCrossTraffic} it follows that $S^i(t) = (C - r^{cr})t$ is a leftover service curve with deficit profile $\varepsilon(b) = e^{-\theta b}/(\theta(r^{cr}-m\rho(\theta)))$. After relaxation of the service curve by rate $\delta > 0$ the sample path deficit profile follows from (\ref{eq:departuresamplepathprofile}) as $\varepsilon_{\delta}(b) = e^{-\theta b}/ (\delta \theta^2 (r^{cr} - m \rho(\theta)))$ for any $\theta$ that satisfies $m \rho(\theta) < r^{cr}$. The network service curve under EBB cross traffic becomes $S^{net} (t) = (C - r^{cr} - \Delta) t$ with deficit profile
\begin{equation*}
\varepsilon^{net}(b) = \frac{1}{r^{cr} \!-\! m \rho(\theta)} \inf_{x} \Biggl\{ \frac{(n-1)^2 e^{-\theta \frac{b-x}{n-1}}}{\Delta \theta^2} +  \frac{e^{-\theta x}}{\theta} \Biggr\}
\end{equation*}
where $x \in (0,b)$ and $\theta$ such that $m \rho(\theta) < r^{cr}$.

Solving $\varepsilon^{net}(b)$ for $b$ yields that end-to-end performance bounds under EBB cross traffic are in $\mathcal{O}(n \log n)$. This result is derived in \cite{ciucu06} and also proven as a lower bound in \cite{burchard:scaling}. In contrast, our result for fBm cross traffic is a scaling in $\mathcal{O}\bigl(n ( \log n)^{\frac{1}{2-2H}}\bigr)$. The scaling is largely determined by the decay rate of overflow probabilities. The difference to EBB is caused by the Weibullian decay for fBm traffic as opposed to the exponential decay for EBB traffic, see Sect. \ref{sec:nearoptimalbeta}. We find that the poly-logarithmic scaling component that increases with $H$ is due to LRD. For the special case $H = \frac{1}{2}$ the overflow probability of fBm traffic decays exponentially fast and we recover the scaling $\mathcal{O} (n \log n)$ for EBB cross traffic. While we conclude that LRD has significant impact on single system performance bounds, we find that the additional effect due to concatenation of tandem systems is moderate.

\begin{figure}
\centering
\includegraphics[width=1.0\columnwidth]{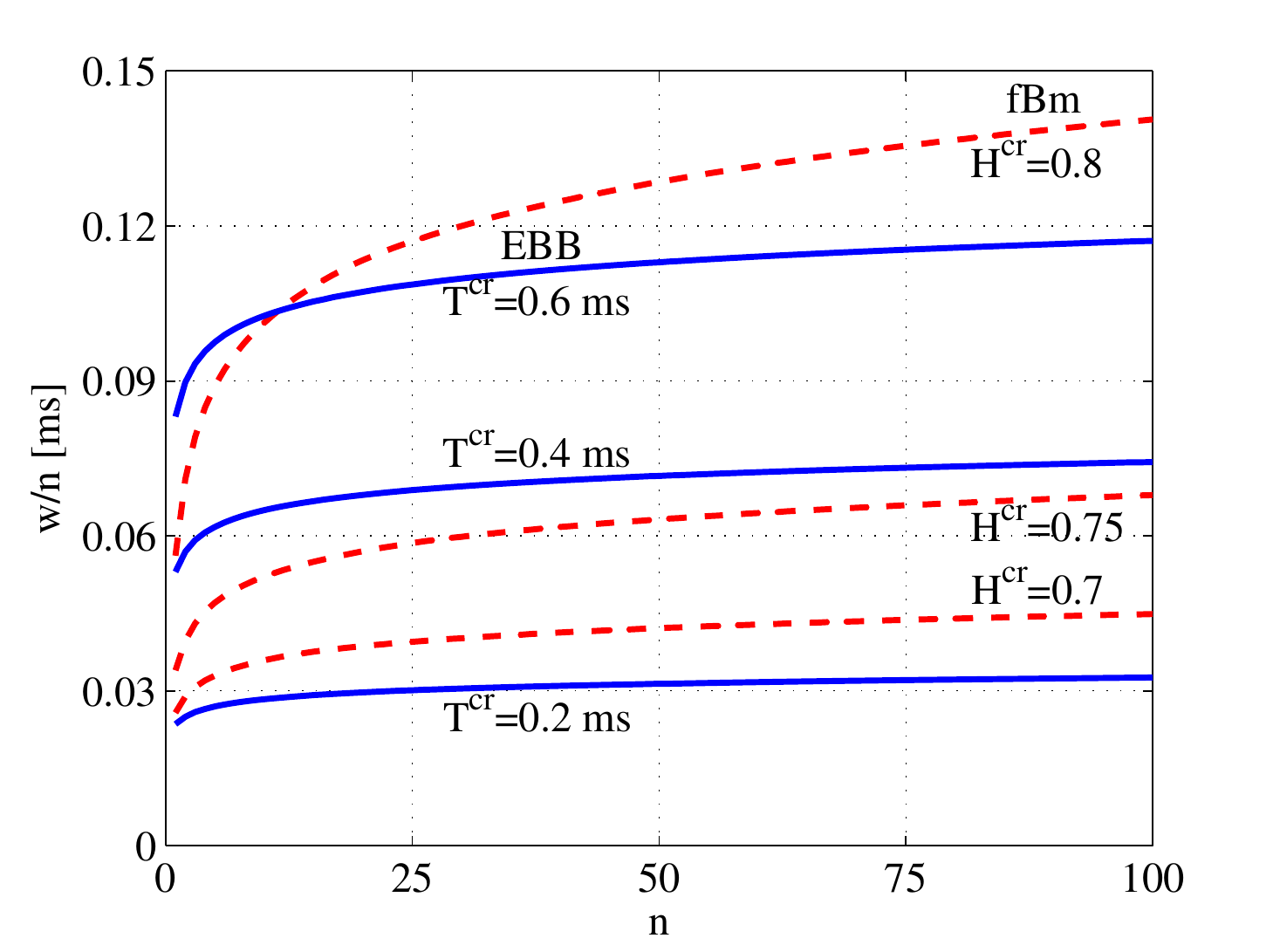}
\caption{End-to-end delay bounds as in Fig. \ref{fig:delaybnd_diff_end2end}, however, normalized by $n$ for EBB and fBm cross traffic, respectively. Under EBB cross traffic $w/n$ grows logarithmically with $n$ whereas it grows poly-logarithmically under fBm cross traffic where the exponent grows with $H$ as $1/(2-2H)$.}
\label{fig:delayinc_diff_end2end}
\end{figure}
In Fig. \ref{fig:delayinc_diff_end2end} we show end-to-end delay bounds for CBR through traffic under either EBB or fBm cross traffic. We use the same parameters and the same network as for Fig. \ref{fig:delaybnd_diff_end2end}, however, we normalize the delay by the number of systems in series $n$ to compare the logarithmic and poly-logarithmic scaling under EBB and fBm cross traffic, respectively. The traffic parameters are given in Tab. \ref{tab:parameters}. As before, we optimize the parameters of the traffic envelopes numerically. Fig. \ref{fig:delayinc_diff_end2end} clearly confirms the logarithmic growth of the normalized delay bounds. Moreover, Fig. \ref{fig:delayinc_diff_end2end} shows that a variation of the parameter $T^{cr}$, that determines the correlation of the EBB cross traffic, mainly reproduces staggered versions of $w/n$. In contrast, varying parameter $H^{cr}$ of fBm cross traffic changes the exponent of the poly-logarithmic scaling and hence alters the shape of $w/n$.

While the scaling result for EBB is in $\Theta (n \log n)$, i.e. it is proven both as an upper bound \cite{ciucu06} and a lower bound \cite{burchard:scaling}, a lower bound that complements the scaling for fBm in $\mathcal{O}\bigl(n (\log n)^{\frac{1}{2-2H}}\bigr)$ still remains to be derived. Note that the scaling results under fBm, respectively, EBB cross traffic are derived without making assumptions about statistical independence of the service left over at individual systems. Under the additional assumption of statistical independence end-to-end performance bounds that scale in $\mathcal{O}(n)$ are derived in \cite{fidler06} for $(\sigma(\theta),\rho(\theta))$ constrained cross traffic \cite{chang00} that is closely related to the EBB traffic model. The effect of statistical independence on the scaling under fBm cross traffic is not addressed in this work and is an unresolved research question. For an elaboration on known scaling results see \cite{ciucu:networkcalculusscaling}.
\section{Conclusions}
\label{sec:Conclusion}
The contribution of this paper are end-to-end statistical performance bounds for a through flow in a network under fBm cross traffic with LRD. To this end, we used the framework of the stochastic network calculus. We developed a sample path envelope for fBm traffic that complements a known approximation using the largest term. Our sample path envelope and the approximation agree in the Weibullian decay of overflow probabilities. We recovered the previous result at the point in time where the violation of an affine upper envelope by fBm traffic is most probable.

From the derived envelopes we obtained the service curve left over by fBm cross traffic at a system. By convolution of these leftover service curves we derived a network service curve. An essential intermediary result for the successful derivation is the twofold integrability of the overflow probability. We showed a numerical evaluation of the impact of fBm cross traffic on the end-to-end performance of through flows. We proved that end-to-end performance bounds for $n$ systems in series grow in $\mathcal{O}\bigl( n (\log n)^{\frac{1}{2-2H}} \bigr)$.
\section*{Acknowledgements}
This work was supported by an Emmy Noether grant of the German Research Foundation (DFG).
%
% Generated by IEEEtran.bst, version: 1.12 (2007/01/11)

%
\section*{Appendix}
\label{sec:Appendix}
\begin{lem}[\textbf{Gamma Function}]
\label{lem:gamma}
For $x \in (0,1)$ and $\xi > 0$ it holds that
\begin{equation*}
\int^{\infty}_{0} x^{t^{\xi}} dt = \frac{\Gamma\bigl(\frac{1}{\xi}\bigr)}{\xi(-\log x)^{\frac{1}{\xi}}}
\end{equation*}
\end{lem}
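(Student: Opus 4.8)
The plan is to reduce the integral to the classical Euler integral for the Gamma function by a single power-law substitution. First I would rewrite the integrand in exponential form. Since $x \in (0,1)$ we have $\log x < 0$, so setting $a := -\log x > 0$ gives $x^{t^{\xi}} = e^{t^{\xi} \log x} = e^{-a t^{\xi}}$. This reduces the claim to evaluating $\int_0^{\infty} e^{-a t^{\xi}} dt$, and incidentally confirms convergence: the integrand is bounded near $t=0$ (it tends to $x^0 = 1$ since $t^{\xi} \to 0$ for $\xi > 0$) and decays to zero as $t \to \infty$ because $a > 0$ and $\xi > 0$. This is precisely the stretched-exponential integral whose value is a Gamma function.

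Next I would apply the substitution $u = a t^{\xi}$, i.e. $t = a^{-1/\xi} u^{1/\xi}$, which yields $dt = \frac{1}{\xi} a^{-1/\xi} u^{1/\xi - 1} du$, with the limits transforming from $t: 0 \to \infty$ to $u: 0 \to \infty$. Carrying this through gives
\begin{equation*}
\int_0^{\infty} e^{-a t^{\xi}} dt = \frac{a^{-1/\xi}}{\xi} \int_0^{\infty} u^{\frac{1}{\xi}-1} e^{-u} du.
\end{equation*}
Finally I would recognize the remaining integral as $\Gamma\bigl(\tfrac{1}{\xi}\bigr) = \int_0^{\infty} u^{\frac{1}{\xi}-1} e^{-u} du$, which is well-defined since $\tfrac{1}{\xi} > 0$. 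Substituting back $a = -\log x$ then delivers the stated closed form $\frac{\Gamma(1/\xi)}{\xi(-\log x)^{1/\xi}}$.

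There is no genuine obstacle here; the computation is elementary once the exponential rewriting is in place. The only point demanding care is the bookkeeping of the exponent in the Jacobian: one must verify that the factor $a^{-1/\xi}$ arising from $dt$ is the sole surviving power of $a$, since the full factor $a t^{\xi}$ is already absorbed into the exponent $e^{-u}$. Getting this exponent right is exactly what produces the $(-\log x)^{-1/\xi}$ prefactor, so I would double-check it explicitly before concluding.
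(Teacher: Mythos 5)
Your proof is correct, and it is essentially the paper's argument run in the opposite direction: the paper starts from Euler's form $\Gamma(y) = \int_0^1 (-\log t)^{y-1}\, dt$ and substitutes $t = e^{-z \tau^{\xi}}$ to arrive at the stretched-exponential integral, whereas you start from that integral, write $x^{t^{\xi}} = e^{-a t^{\xi}}$ with $a = -\log x > 0$, and substitute $u = a t^{\xi}$ to land on the standard representation $\Gamma\bigl(\frac{1}{\xi}\bigr) = \int_0^{\infty} u^{\frac{1}{\xi}-1} e^{-u}\, du$. Both are single change-of-variable computations producing the identical prefactor $\xi^{-1} (-\log x)^{-\frac{1}{\xi}}$, and your bookkeeping of the Jacobian exponent $a^{-1/\xi}$ is exactly right.
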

\begin{proof}
The definition of the Gamma function by Euler states that for all $y > 0$
\begin{equation*}
\Gamma(y) = \int_{0}^{1} (-\log t )^{y-1} dt .
\end{equation*}
We substitute $t = e^{-z \tau^{\xi}}$ where $z > 0$ and $\xi > 0$. It follows that $dt = -z \xi e^{-z \tau^{\xi}}  \tau^{\xi-1} d\tau$ and
\begin{equation*}
\Gamma(y) = z^y \xi \int_0^{\infty} \tau^{\xi y-1} e^{-z \tau^{\xi}} d\tau .
\end{equation*}
Assuming $\xi > 0$ and letting $y = 1/\xi$ yields
\begin{equation*}
\Gamma\biggl(\frac{1}{\xi}\biggr) = z^{\frac{1}{\xi}} \xi \int_0^{\infty} e^{-z \tau^{\xi}} d\tau .
\end{equation*}
Finally, we substitute $z = - \log x$ where $x \in (0,1)$.
\end{proof}

\begin{thebibliography}{10}
\providecommand{\url}[1]{#1}
\csname url@samestyle\endcsname
\providecommand{\newblock}{\relax}
\providecommand{\bibinfo}[2]{#2}
\providecommand{\BIBentrySTDinterwordspacing}{\spaceskip=0pt\relax}
\providecommand{\BIBentryALTinterwordstretchfactor}{4}
\providecommand{\BIBentryALTinterwordspacing}{\spaceskip=\fontdimen2\font plus
\BIBentryALTinterwordstretchfactor\fontdimen3\font minus
  \fontdimen4\font\relax}
\providecommand{\BIBforeignlanguage}[2]{{%
\expandafter\ifx\csname l@#1\endcsname\relax
\typeout{** WARNING: IEEEtran.bst: No hyphenation pattern has been}%
\typeout{** loaded for the language `#1'. Using the pattern for}%
\typeout{** the default language instead.}%
\else
\language=\csname l@#1\endcsname
\fi
#2}}
\providecommand{\BIBdecl}{\relax}
\BIBdecl

\bibitem{kleinrock75}
L.~Kleinrock, \emph{Queueing Systems, Volume 1: Theory}.\hskip 1em plus 0.5em
  minus 0.4em\relax John Wiley \& Sons, 1975.

\bibitem{appenzeller:sizingbuffers}
G.~Appenzeller, I.~Keslassy, and N.~McKeown, ``Sizing router buffers,'' in
  \emph{Proc. of {ACM} {SIGCOMM}}, Sep. 2004, pp. 281--292.

\bibitem{leland94}
W.~Leland, M.~Taqqu, W.~Willinger, and D.~Wilson, ``On the self-similar nature
  of {E}thernet traffic (extended version),'' \emph{{IEEE/ACM} Trans. Netw.},
  vol.~2, no.~1, pp. 1--15, Feb. 1994.

\bibitem{paxon95}
V.~Paxson and S.~Floyd, ``Wide-area traffic: The failure of {P}oisson
  modeling,'' \emph{{IEEE/ACM} Trans. Netw.}, vol.~3, no.~3, pp. 226--244, Jun.
  1995.

\bibitem{crovella97}
M.~E. Crovella and A.~Bestavros, ``Self-similarity in {W}orld {W}ide {W}eb
  traffic: evidence and possible causes,'' \emph{IEEE/ACM Trans. Netw.},
  vol.~5, no.~6, pp. 835--846, Dec. 1997.

\bibitem{willinger97}
W.~Willinger, M.~Taqqu, R.~Sherman, and D.~Wilson, ``Self-similarity through
  high-variability: statistical analysis of {E}thernet {LAN} traffic at the
  source level,'' \emph{{IEEE/ACM} Trans. Netw.}, vol.~5, no.~1, pp. 71--86,
  Feb. 1997.

\bibitem{feldmann:iptraffic}
A.~Feldmann, A.~C. Gilbert, P.~Huang, and W.~Willinger, ``Dynamics of {IP}
  traffic: A study of the role of variability and the impact of control,'' in
  \emph{Proc. of {ACM} {SIGCOMM}}, Aug. 1999, pp. 301--313.

\bibitem{taqqu97}
M.~Taqqu, W.~Willinger, and R.~Sherman, ``Proof of a fundamental result in
  self-similar traffic modeling,'' \emph{SIGCOMM Comput. Commun. Rev.},
  vol.~27, no.~2, pp. 5--23, Apr. 1997.

\bibitem{mandelbrot68}
B.~Mandelbrot and J.~V. Ness, ``Fractional {B}rownian motions, fractional
  noises and applications,'' \emph{SIAM Review}, vol.~10, no.~4, pp. 422--437,
  1968.

\bibitem{norros94}
I.~Norros, ``A storage model with self-similar input,'' \emph{Queueing
  Systems}, vol.~16, no.~3, pp. 387--396, Sep. 1994.

\bibitem{norros95}
------, ``On the use of fractional {B}rownian motion in the theory of
  connectionless networks,'' \emph{{IEEE} J. Sel. Areas Commun.}, vol.~13,
  no.~6, pp. 953--962, Aug. 1995.

\bibitem{duffield94}
N.~Duffield and N.~Connell, ``Large deviations and overflow probabilities for
  the general single-server queue, with applications,'' 1995.

\bibitem{massoulie99}
L.~Massoulié and A.~Simonian, ``Large buffer asymptotics for the queue with
  {FBM} input,'' \emph{Journal of Applied Probability}, vol.~36, pp. 894--906,
  1999.

\bibitem{kelly96}
F.~P. Kelly, ``Notes on effective bandwidths,'' ser. Royal Statistical Society
  Lecture Notes.\hskip 1em plus 0.5em minus 0.4em\relax Oxford University,
  1996, no.~4, pp. 141--168.

\bibitem{chang00}
C.~Chang, \emph{Performance Guarantees in Communication Networks}.\hskip 1em
  plus 0.5em minus 0.4em\relax Springer, 2000.

\bibitem{leboudec:networkcalculus}
J.-Y. {Le Boudec} and P.~Thiran, \emph{Network Calculus A Theory of
  Deterministic Queuing Systems for the {I}nternet}.\hskip 1em plus 0.5em minus
  0.4em\relax Springer, 2001.

\bibitem{li07}
C.~Li, A.~Burchard, and J.~Liebeherr, ``A network calculus with effective
  bandwidth,'' \emph{IEEE/ACM Trans. Netw.}, vol.~15, no.~6, pp. 1442--1453,
  Dec. 2007.

\bibitem{burchard06}
A.~Burchard, J.~Liebeherr, and S.~Patek, ``A min-plus calculus for end-to-end
  statistical service guarantees,'' \emph{{IEEE} Trans. Inf. Theory}, vol.~52,
  no.~9, pp. 4105--4114, Sep. 2006.

\bibitem{ciucu06}
F.~Ciucu, A.~Burchard, and J.~Liebeherr, ``Scaling properties of statistical
  end-to-end bounds in the network calculus,'' \emph{IEEE/ACM Trans. Netw.},
  vol.~14, no.~SI, pp. 2300--2312, Jun. 2006.

\bibitem{fidler06}
M.~Fidler, ``An end-to-end probabilistic network calculus with {M}oment
  {G}enerating {F}unctions,'' in \emph{Proc. of IWQoS}, Jun. 2006, pp.
  261--270.

\bibitem{jiang:stochasticnetworkcalculus}
Y.~Jiang and Y.~Liu, \emph{Stochastic Network Calculus}.\hskip 1em plus 0.5em
  minus 0.4em\relax Springer, Sep. 2008.

\bibitem{parekh:processorsharing1}
A.~K. Parekh and R.~G. Gallager, ``A {G}eneralized {P}rocessor {S}haring
  approach to flow control in {I}ntegrated {S}ervices {N}etworks: The
  single-node case,'' \emph{{IEEE/ACM} Trans. Netw.}, vol.~1, no.~3, pp.
  344--357, 1993.

\bibitem{sariowan:servicecurves}
H.~Sariowan, R.~L. Cruz, and G.~C. Polyzos, ``Scheduling for quality of service
  guarantees via service curves,'' in \emph{Proc. of {IEEE} {ICCCN}}, Sep.
  1995, pp. 512--520.

\bibitem{cruz:qosmanagement}
R.~L. Cruz, ``Quality of service management in {I}ntegrated {S}ervices
  networks,'' in \emph{Proc. of the Semi-Annual Research Review, Center of
  Wireless Communication, {UCSD}}, Jun. 1996.

\bibitem{cruz:networkdelaycalculus}
------, ``A calculus for network delay, part {I} and {II}: Network elements in
  isolation and network analysis,'' \emph{{IEEE} Trans. Inf. Theory}, vol.~37,
  no.~1, pp. 114--141, Jan. 1991.

\bibitem{sidi93}
O.~Yaron and M.~Sidi, ``Performance and stability of communication networks via
  robust exponential bounds,'' \emph{{IEEE/ACM} Trans. Netw.}, vol.~1, no.~3,
  pp. 372--385, Jun. 1993.

\bibitem{starobinski:stochasticallyboundedburstiness}
D.~Starobinski and M.~Sidi, ``{S}tochastically {B}ounded {B}urstiness for
  communication networks,'' \emph{{IEEE} Trans. Inf. Theory}, vol.~46, no.~1,
  pp. 206--212, Jan. 2000.

\bibitem{boorstyn:effectiveenvelopes}
R.-R. Boorstyn, A.~Burchard, J.~Liebeherr, and C.~Oottamakorn, ``Statistical
  service assurances for traffic scheduling algorithms,'' \emph{{IEEE} J. Sel.
  Areas Commun.}, vol.~18, no.~12, pp. 2651--2664, Dec. 2000.

\bibitem{Yin02}
Q.~Yin, Y.~Jiang, S.~Jiang, and P.~Y. Kong, ``Analysis of generalized
  stochastically bounded bursty traffic for communication networks,'' in
  \emph{Proc. of {IEEE} {LCN}}, Nov. 2002, pp. 141--149.

\bibitem{kumar04}
A.~Kumar, D.~Manjunath, and J.~Kuri, \emph{Communication Networking: An
  Analytical Approach}.\hskip 1em plus 0.5em minus 0.4em\relax Morgan Kaufmann
  Publishers Inc., 2004.

\bibitem{glynn94}
P.~W. Glynn and W.~Whitt, ``Logarithmic asymptotics for steady-state tail
  probabilities in a single-server queue,'' pp. 131--156, Jan. 1994.

\bibitem{grimm08}
C.~Grimm and G.~Schl\"uchtermann, \emph{IP-Traffic Theory and
  Performance}.\hskip 1em plus 0.5em minus 0.4em\relax Springer, 2008.

\bibitem{mayor:fbmtimescale}
G.~Mayor and J.~Silvester, ``Time scale analysis of an {ATM} queueing system
  with long-range dependent traffic,'' in \emph{Proc. of {IEEE} {INFOCOM}},
  Apr. 1997, pp. 205--212.

\bibitem{fonseca00}
N.~Fonseca, G.~Mayor, and C.~Neto, ``On the equivalent bandwidth of
  self-similar sources,'' \emph{ACM Trans. Model. Comput. Simul.}, vol.~10,
  no.~2, pp. 104--124, Apr. 2000.

\bibitem{ciucu:networkcalculusscaling}
F.~Ciucu, ``Scaling properties in the stochastic network calculus,'' Ph.D.
  dissertation, Univ. of Virginia, Aug. 2007.

\bibitem{burchard:scaling}
A.~Burchard, J.~Liebeherr, and F.~Ciucu, ``On {Theta (H log H)} scaling of
  network delays,'' in \emph{Proc. of {IEEE} {INFOCOM}}, May 2007, pp.
  1866--1874.

\end{thebibliography}
\end{document}